\newcommand{\mI}{\mathcal{I}}
\newcommand{\mT}{\mathcal{T}}
\newcommand{\mX}{\mathcal{X}}
\begin{document}

\title{Dynamic Choice and Common Learning}

\author[Deb]{Rahul Deb$^{\between}$}
\address{$^{\between}$Department of Economics, University of Toronto\\\href{mailto:rahul.deb@utoronto.ca}{rahul.deb@utoronto.ca}}
\author[Renou]{Ludovic Renou$^{\dag}$ \\ \today}
\address{$^{\dag}$School of Economics and Finance, Queen Mary University of London\\\href{mailto: lrenou.econ@gmail.com}{lrenou.econ@gmail.com}}

\thanks{This paper was conceptualized at the Studienzentrum Gerzensee and we are grateful for their gracious hospitality. We thank Laura Doval, Alex Jakobsen, Paola Manzini and several conference, seminar participants for comments, suggestions and criticisms that greatly improved the paper. Deb is always grateful to the SSHRC for their continued and generous financial support. Renou gratefully acknowledges the support of the Agence Nationale pour la Recherche under grant ANR CIGNE (ANR-15-CE38-0007-01) and through the ORA Project ``Ambiguity in Dynamic Environments'' (ANR-18-ORAR-0005).}

\begin{abstract}
A researcher observes a finite sequence of choices made by multiple agents in a binary-state environment. Agents maximize expected utilities that depend on their chosen alternative and the unknown underlying state. Agents learn about the time-varying state from the same information and their actions change because of the evolving common belief. The researcher does not observe  agents' preferences, the prior, the common information and the stochastic process for the state. We characterize the set of choices that are rationalized by this model and generalize the information environments to allow for private information. We discuss the implications of our results for uncovering  discrimination and committee decision making.
\end{abstract}

\maketitle

\section{Introduction}
Sets of individuals frequently make decisions based on the same information. For example, in court cases, judges are presented with the same evidence and arguments prior to reaching their decisions. Employers and graduate schools peruse the same curriculum vitae and reference letters prior to inviting applicants for interviews. As yet another instance, deliberative committees such as monetary policy committees gather the opinions and views of all members prior to making decisions. Naturally, even though individuals have the same information, they may still make distinct decisions because their preferences differ. But different preferences alone may not be able to rationalize \textit{all} patterns of choices. What are the rationalizable patterns of choices? Can individual choices reveal changing preferences? If so, can this information be used, for instance, to test for discrimination? \medskip

To address these questions as well as others, we consider a simple binary-state environment in which multiple agents make choices repeatedly. A researcher observes the choices over time, but observes neither the preferences of the agents nor their information.  The researcher assumes that the agents agree about the evolution of the state and learn about the state from common signals. We fully characterize the testable implications of this model, that is, we derive necessary and sufficient conditions for choice data to be rationalizable.\medskip 

We begin with a baseline model in which there are two alternatives and all agents get a higher utility from choosing the alternative that matches the state. The characterization (\cref{thm:baseline}) of this model is easy to state: choice data can be rationalized if, and only if, there are no two periods and two agents who pick different and opposite alternatives in each period. This shows that the model can be refuted with just two periods of choice data despite the researcher not observing the state or its evolution, the preferences, the prior belief and the common signals that agents use to learn. Moreover, the proof of this result demonstrates that we can extract preference rankings across agents. As we discuss later, this information can be useful for applications.\medskip
 
We then generalize the baseline model along several dimensions. We first demonstrate how to accommodate general preferences; agents may get a higher utility by choosing the alternative that mismatches the state (\cref{cor:gen_prefs}). Then, we allow for multiple actions (\cref{thm:mult_actions}). Finally, we relax the assumption of common signals and instead assume that agents' information, while different, is related. Specifically agents observe private signals that differ in their informativeness but favor the same state. In other words, agents agree in each period about whether the information is `good' or `bad' news but disagree about how conclusive the new evidence is. We characterize the testable implications of this general model when the state is time-invariant (\cref{thm:comon_inv}) and show that all choice data can be rationalized if the state is time varying (\cref{thm:comon_var}). \medskip 

To illustrate the range of possible applications, it is instructive to consider a simple example. There are two employers and six applications from prospective employees. Applications $m_1$, $m_2$ and $m_3$, are from male applicants, while applications $f_1$, $f_2$ and $f_3$ are from female applicants. The productivity of each applicant is either low or high (the state).  A researcher observes which applicants an employer invites for interviews, but does not observe the curricula vitae and the reference letters -- the common signals that the employers have. \cref{table:intro} summarizes the researcher's choice data.  

\begin{table}[h]
\caption{Employer's decisions: yes = ``invite for an interview.''}\label{table:intro}
\begin{tabular}{|c|c|c|c|c|c|c|} \hline
& $m_1$ & $m_2$ & $m_3$ & $f_1$ & $f_2$ & $f_3$ \\  \hline
Employer 1 & yes & yes  & no & no & no & yes \\ \hline  
Employer 2 & yes & no   & no & no & yes & yes \\ \hline
\end{tabular}
\end{table}
\medskip 

As in our baseline model, suppose employers only want to interview high-productivity applicants. Both employers may differ, however, in their disutility of type I and type II errors. We now explain how to use our characterization to test for taste-based discrimination. The key observation is that we can not only test whether the entire sample is rationalizable, but also whether the two sub-samples of male and female applicants are. It is immediate to check that the entire dataset is not rationalizable as there are two different candidates, $m_2$ and $f_2$, with opposite and distinct interview decisions by both employers. Thus, we reject the hypothesis that interview decisions are made based on common beliefs and employer-specific preferences that are identical over male and female candidates.

Yet, the two sub-samples are both individually rationalizable. Thus, we cannot reject the hypothesis that the employers have common beliefs but difference preferences over male and female candidates. In fact, we can even make further inferences about the preferences of the employers. Note that whenever Employer 2 invites a male applicant, so does Employer 1. Since with a binary state, an employer invites an applicant only when the common belief is above a threshold, we can infer that Employer 2's threshold for male applicants is  strictly above that of Employer 1. (The threshold is the belief at which an employer is indifferent between the two possible choices.) The opposite is true on the sub-sample of female applicants. \medskip 

A simple application of our tests thus provides suggestive evidence of taste-based discrimination without requiring any information about the quality of the male and female applicants. We do not even need the two sub-samples to be balanced or nearly identical in terms of applications. Also note that there is nothing special about the context of this example; we could replace employers by judges, job applicants by defendants, gender by race and the interview invitations by judicial decisions. In \cref{sec:applications}, we discuss in greater detail the application of our tests to the issue of discrimination, along with other applications.

\subsection*{Related Literature}

This paper is related to the recent decision theory literature that aims to characterize different models of learning and costly information acquisition. A few recent examples are \citet{caplin2015}, \citet{lu2016} and \citet{frick2019}. Perhaps the main difference of our work with this literature is the data on which our respective analyses are conducted. Decision theory papers typically assume that the analyst can observe agents' choices in \textit{every} possible choice scenario. In the case of random choice induced by learning/information acquisition, this amounts to assuming that the analyst can observe the choice \textit{probabilities} of the various alternatives in all choice problems. Needless to say, this is a demanding data requirement.\medskip 

By contrast, we only assume that finitely many \textit{realized} choices are observed from a single choice problem and that the environment under which these choices are made is changing. While the data requirement for our analysis is modest relative to the above literature, the tradeoff is that it is not possible to identify the information (and identification is often an important goal in a decision theoretic analysis). Our results are driven by the assumption that we observe the choices of multiple individuals whose beliefs are systematically related. In practice, versions of this assumption are routinely made implicitly or explicitly.\footnote{Outside of decision theory, this assumption is frequently made in the finance and information design literatures in which information often takes the form of public signals.} Choice probabilities (as opposed to realized choices) are not observed and must be estimated. It is quite unusual to be able to estimate choice probabilities from a single individual as this would require many observations (for any meaningful precision) from the identical choice problem (the same set of alternatives in an identical information environment). So instead, choice probabilities can be obtained by averaging across the choices of different (typically observationally equivalent) individuals in the data. This implicitly assumes that  (conditioning on covariates) these scenarios are treated as identical or, in other words, that different agents have the same preferences and information (but may get different signal realizations).\medskip

The most closely related papers are \citet{shmaya2016} and \citet{deoliveira2019} which also study the testable implications of learning on observed choices. \citet{shmaya2016} study a single agent decision problem in which an agent observes signals and has to report which state she believes to be the most likely. They show that, absent additional structure on information, all choices can be explained by Bayesian updating. \citet{deoliveira2019} consider a related environment but they allow for completely general (but known to the researcher) dynamic preferences that need not be time separable. They show that not all choices are possible and derive a dominance condition that characterizes rationalizability. \cite{makris2021} generalize the concept of Bayes correlated equilibrium to multi-stage games. As an application of their results, they reformulate the problem of \citet{deoliveira2019} as characterizing the Bayes' correlated equilibria of a dynamic decision problem.\medskip

The main difference between these papers and ours is that we do not assume that the researcher knows the agents' preferences. This is an important generalization because preferences are unobserved in all field data. Instead, our main assumption that generates testable predictions is the fact that different agents' information is related. We view this modeling assumption to be one of the main conceptual novelties of our framework.\medskip 

\section{The baseline model}\label{sec:model}

This section presents and discusses the baseline model, which we analyze first. \cref{sec:generalizations} generalizes it in several directions.

\medskip

A set of $\mI:=\{1,\dots,I\}$ agents makes choices over $\mT=\{1,\dots,T\}$ periods. In each period $t$,  agent $i$ chooses an \textit{action} $x_{i,t}\in X_t$ from a binary set $X_t=\{x,y\}$ for all $t$. A researcher observes \textit{choice data} $\mX:=(x_{i,t})_{(i,t) \in \mI \times \mT}$ that contains the choices of all agents over all time periods. This is all the researcher observes. We explain later how to deal with richer choice data, where the researcher observes additional characteristics of the choice problems.   

\medskip

We now define the model the researcher postulates to rationalize the choice data $\mX$. There is an unknown, possibly time-varying, underlying \textit{state} of the world $\omega_t\in \{\omega^x,\omega^y\}$ at each period $t$. We deliberately index the state by the alternative to indicate that action $x$ (resp., $y$) is the best in state $\omega^x$ (resp., $\omega^y$). The researcher thus assumes that agent $i$'s utility function $u_i$  satisfies $u_i(x,\omega^x)\geq u_i(y,\omega^x)$ and $u_i(y,\omega^y) \geq u_i(x,\omega^y)$, with at least one strict inequality. For any utility function $u_i$, we denote
\begin{align*}
	 \overline{u}_i:=\frac{u_i(y,\omega^y)-u_i(x,\omega^y)}{u_i(x,\omega^x) -u_i(y,\omega^x)+ u_i(y,\omega^y) - u_i(x,\omega^y)}.
\end{align*}
the belief threshold at which the agent is indifferent between choosing $x$ and $y$. 

\smallskip

We now turn to the beliefs and the information. The researcher assumes that  the agents share a common \textit{prior} $p_0$ and agree on the evolution of the state. The \textit{transition probability} from state $\omega_{t-1}$ to $\omega_{t}$ is  $\gamma_t(\omega_{t}\,|\,\omega_{t-1})$. We say the state is \textit{time-invariant} if $\gamma_t(\omega^x\,|\,\omega^x)=\gamma_t(\omega^y\,|\,\omega^y)=1$ for all $t\in\mathcal{T}$; otherwise, we say the state is \textit{time-varying}.

\medskip

Prior to choosing an action (but after the state has evolved), the agents observe a \textit{common signal} $s_t$, a realization of the statistical \textit{experiment}  $\pi_t: \{\omega^x,\omega^y\} \rightarrow \Delta(S_t)$, with $S_t$ a finite set of signals.\medskip 

Finally, the researcher assumes that the agents are expected utility maximizers and update their beliefs by Bayes' rule. We use $p_t$ to denote the \textit{belief} that the state is $\omega^x$ at period $t$ (the probability of $\omega^y$ is the complementary probability). The common belief at period $t$ is thus: 
\begin{align}
	p_{t}&=\frac{	\left[p_{t-1}\gamma_t(\omega^x|\omega^x)+(1-p_{t-1})\gamma_t(\omega^x|\omega^y)\right]\pi_t(s_t|\omega^x)}{\sum_{\omega\in\{\omega^x,\omega^y\}}\left[p_{t-1}\gamma_t(\omega|\omega^x)+(1-p_{t-1})\gamma_t(\omega|\omega^y)\right]\pi_t(s_t|\omega)}.\label{eq:Bayes}
\end{align}
Hence, agent $i$ chooses $x$ at period $t$ if and only if $p_t \geq \overline{u}_i$. 

\medskip

For ease of reference, the timing is summarized below

\begin{figure*}[h]
	\hspace*{-1cm}
	\includegraphics[scale=.8]{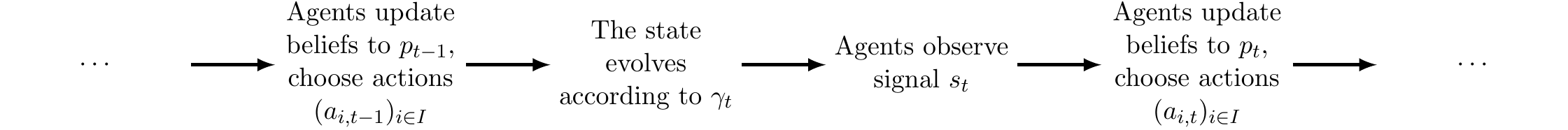}
\end{figure*}

\medskip

The aim of this paper is characterize choice data that can be rationalized by this model. In words, we are interested in when a researcher who only observes choice data $\mX$ can find some parameters for the model such that, given those parameters, the agents would choose precisely the observed data. Formally, this is defined as follows.

\begin{definition}\label{def:rat}
	The choice data $\mathcal{X}$ is \textbf{rationalizable} if, for all $i\in\mI$ and $t\in\mT$, there exist utility functions $u_i$, transition probabilities $\gamma_t$, prior $p_{0}$, beliefs $p_{t}$, sets of signals $S_t$, experiments $\pi_t: \{\omega^x,\omega^y\} \rightarrow \Delta(S_t)$ and realized signals $s_t \in S_t$ such that 
	\begin{enumerate}[label=\roman*.]		
		\item \textit{Actions maximize utility:} $p_{t} >  \overline{u}_i$ if $x_{i,t}=x$ (resp., $p_{t} <  \overline{u}_i$ if $x_{i,t}=y$),
		\item \textit{Bayesian Updating:} $p_{t}$ is derived from $p_{t-1}$ by Bayes' rule.
	\end{enumerate}
\end{definition}
\noindent In what follows, we will sometimes explicitly add a qualifier to signify when the state is time invariant.

\medskip

Before we proceed to discussing the various assumptions of the model, it is worth making a few comments about our definition of rationalizability. First, observe that this criterion requires choices to be strictly optimal (captured by the strict inequality in point i). This assumption is made to rule out indifference that would make the analysis trivial. If the strict inequality in point i were to be replaced by a weak inequality, then any choice data could be rationalized by a constant belief and common utilities $p_t=\overline{u}_1=\cdots=\overline{u}_I$ for all $t\in \mT$. Second, implicit in point ii is the assumption that the realized signals $s_t$ lie in the support of the experiment so that Bayes' rule can be used. In subsequent sections, we will discuss how non-Bayesian belief updating can be accommodated.

\subsection{Discussion of the model}\label{sec:model_discussion}

The payoffs in the baseline model capture situations where decision-makers are uncertain about which course of action is best and differ in their disutility for type I and II errors. Such preferences are common in economics and political science. Indeed, the model as a whole is fairly canonical in the literature on committee voting. Here, the state captures the correct decision (for instance, maintain the status quo or introduce a reform, convict or acquit a defendant in a court) and the actions are the votes. The repeated choices could either reflect votes on the same issue (deciding whether or not to implement a project that was postponed in a previous vote) or different observationally identical scenarios (such as judges deciding distinct cases with the same basic facts). We now discuss our modeling choices, starting with the assumption of common information.

\medskip 

\noindent{\sc Common signals:} As we discussed in the introduction, there are several multiple-agent decision problems where it is natural to assume that agents learn via common signals. In other contexts such as committee voting, the literature considers both models like ours with common information and settings where agents receive private signals prior to making choices. Common information is frequently assumed (such as in \citealp{li2001,alonso2016}) when the aim is to analyze the effect of information on committee outcomes. Surveys that describe the foundational work in this strand of the literature are \citet{gerling2005} and \citet{li2009}. Note that, even if the agents receive private signals, they often have the opportunity to communicate them to others or, in other words, to deliberate prior to making choices.

Moreover, there is a simple theoretical reason to focus attention on common signals. Without additional constraints, \emph{all} choice data  are rationalizable if the model were to allow for private signals (even with a common utility function). In a nutshell, private signals make it possible to decouple the choice problems agent-by-agent. Without additional constraints, we can then generate arbitrary sequences of individual beliefs  and, thus, rationalize all choice data. We refer to \cref{sec:comonotone} for a model with non-trivial testable implications, where agents receive private signals. 

\medskip

\noindent {\sc Realizations versus choice probabilities:} As we stressed in the introduction, a strength of our framework is that choice data consists of actual choices as opposed to choice probabilities. This implies that the data requirements to test our model are relatively modest. In particular, this means that we neither need to observe the agents choosing repeatedly in the identical information environment nor do we need to aggregate data across different agents. 

The strengths and weaknesses of our framework can be clearly articulated by contrasting with a related decision theory paper. \citet{lu2016} considers an agent who chooses an alternative after receiving some private information (unobserved by the analyst) and provides a theory in which the information can be fully identified from the random choice rule. For identification, he requires the analyst to observe the choice probabilities from \textit{every} menu of possible acts (mappings from states to lotteries over prizes). By contrast, we only require the analyst to observe the \textit{realized} choices from a \textit{single menu} of two acts; alternative $x$ (and similarly $y$) can be viewed as an act that generates payoff $u(x,\omega)$ in each state $\omega$. Of course, the downside of an analysis on such parsimonious data is that, unlike \cite{lu2016}, we cannot identify the information.

\medskip

\noindent {\sc Separate decision problems vs. equilibrium:} We presented our choice data as arising from the separate decisions of $I$ agents. This is the appropriate interpretation for several of the examples from the introduction including that of the two employers. In the context of committee decision making, this corresponds to sincere voting where members vote for their preferred option (and ignore strategic considerations). However, in the case of two alternatives and public information (as in our baseline model), sincere voting is also an equilibrium of the one-shot voting game where alternative $x$ is chosen only when more than a given exogenous threshold of voters choose $x$.

In the repeated interaction that is implicit in our choice data, sincere voting will also be an equilibrium of the repeated game since it is a stage game equilibrium (although, of course, there may be multiple equilibria). That said, our model also captures certain, more complex dynamic decision making as a special case. We discuss this next.

\medskip

\noindent {\sc Dynamic optimization:} The agents we defined above myopically maximized their per period utility. This assumption is commonly made in the revealed preference literature that builds on \citet{afriat1967} where the aim is to derive the testable implications of various models of consumption on price quantity data (see \citealp{chambers2016} for a description of the literature). In our setting, behaving myopically is also dynamically optimal if the agents' actions do not affect the evolution of the state and the information they receive. This is true for certain applications like judicial decision making (which we discuss in more detail in \cref{sec:app_committees}) where the decision to convict or acquit does not shed any additional light on the guilt of defendant.

Importantly, certain classes of dynamic models in which actions affect the arrival of information also lead to behavior that is observationally indistinguishable. Loosely speaking, the solution to a Bellman equation can result in cutoff behavior in which agent $i$'s action choice depends on whether the public belief lies above or below a cutoff $\overline{u}_i$. In other words, even though agents are actually dynamically optimizing, their behavior is covered by our rationalization criterion (\cref{def:rat}). Recent examples of such dynamic political economy models that fit our framework are \citet{gul2012} and \citet{chan2018}.

\medskip

\noindent {\sc Richer choice data:} The baseline model assumes that the researcher only observes the agent's choices. We can also consider richer choice data $\mX = (x_{i,t},\theta_t)_{(i,t) \in \mI \times \mT}$, where the researcher also observes a vector $\theta_t \in \Theta_t$ of characteristics about the choice problems at each period $t$. We can generalize our definition of rationalization to experiments $\pi_t : \{\omega^x,\omega^y\} \rightarrow \Delta(S_t \times \Theta_t)$ and transitions $\gamma_t:  \{\omega^x,\omega^y\} \times \Theta_{t-1} \rightarrow \Delta( \{\omega^x,\omega^y\})$, without affecting our main results. If the richer choice data $\mX = (x_{i,t},\theta_t)_{(i,t) \in \mI \times \mT}$ is rationalizable, the simpler choice data $ (x_{i,t})_{(i,t) \in \mI \times \mT}$ is rationalizable with $\widehat{\pi}_{t}(s_t|\omega_t):= \pi_{t}(s_t,\theta_t|\omega_t)/\pi_t(\theta_t|\omega_t)$ and $\widehat{\gamma}_t(\omega_t|\omega_{t-1}):=  \gamma_t(\omega_t|\omega_{t-1},\theta_{t-1})$, and we can similarly do the converse. In \cref{sec:app_discrimination}, we demonstrate how richer choice data can be used to detect evidence of discrimination; recall that the example at the end of the introduction had this feature as the analyst observed the gender of the applicants.

\medskip 

To summarize, we view our model to be a simple reduced form representation of dynamic behavior that is flexible enough to encompass several distinct dynamic settings as special cases.

\section{The testable implications of the baseline model} \label{sec:results_baseline}

We first present a simple example, which shows that not all choice data are rationalized. 

\begin{example}\label{eg:main_example}
	Consider the choice data $\mX$ of two individuals $i$, $j$ and two time periods $t_1$, $t_2$ represented in the following table:
		\begin{center}
		\begin{tabular}{|c|c|c|}\hline
			& $i$ & $j$     \\ \hline
			$t_1$ & $x$ & $y$     \\ \hline
			$t_2$ & $y$ & $x$\\ \hline
		\end{tabular}
	\end{center}

	\medskip
	
	\noindent In the table, choices in the rows correspond to a given time period and those in the the columns correspond to the agents.
	
	We argue that this choice data cannot be rationalized. So suppose, for the purpose of contradiction, that it can. This implies that we can find utilities and beliefs that satisfy
	\begin{align*}
		& p_{t_1} >\overline{u}_i > p_{t_2}, \\
		& p_{t_1} < \overline{u}_j < p_{t_2},
	\end{align*}
	which is clearly a contradiction. In words, agent $i$'s actions imply that the belief must be strictly lower in observation $t_2$ compared to $t_1$ whereas agent $j$'s actions imply the exact opposite.  
\end{example}

It is worth discussing a few features of the simple argument we employed in the above example. Note that it did not even mention the state transition probabilities. Moreover, this argument did not require the agents to be Bayesian.  As long as they share a common belief, they could be learning in \textit{any} way and the data in \cref{eg:main_example} would still not be rationalizable. Finally, the time periods $t_1$ and $t_2$ need not be consecutive or ordered chronologically.

\medskip

The following definition gives a name to the pattern of choices in \cref{eg:main_example}.

\begin{definition}\label{def:cycles}
	Choice data $\mX$ has a \textbf{cycle} if there are distinct time periods $(t_1,t_2)\in \mT \times \mT$ and agents $(i,j) \in \mI \times \mI$ such that $x_{i,t_1}=x_{j,t_2}=x$ and $x_{i,t_2}=x_{j,t_1}=y$.
\end{definition}

\cref{eg:main_example} shows that a necessary condition for rationalization is the absence of cycles. The following result shows that this condition is also sufficient. In other words, the choice data described in \cref{eg:main_example} is the \textit{only} pattern of choices that cannot be rationalized. 

\medskip

\begin{theorem}\label{thm:baseline}
	Choice data $\mathcal{X}$ is rationalizable if, and only if, it has no cycles.
\end{theorem}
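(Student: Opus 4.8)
The necessity direction is already established by \cref{eg:main_example}: if the data had a cycle $(t_1,t_2,i,j)$ then $i$'s choices force $p_{t_1}>\overline{u}_i>p_{t_2}$ while $j$'s choices force $p_{t_1}<\overline{u}_j<p_{t_2}$, a contradiction. So the plan is to prove sufficiency: assuming $\mX$ has no cycles, I would construct utilities $(u_i)_{i\in\mI}$, a prior, experiments and signal realizations that rationalize the data. My first move is to reduce the problem to constructing just two objects: a sequence of beliefs $(p_t)_{t\in\mT}$ and thresholds $(\overline{u}_i)_{i\in\mI}$ in $(0,1)$ satisfying, for every $(i,t)$, $p_t>\overline{u}_i$ when $x_{i,t}=x$ and $p_t<\overline{u}_i$ when $x_{i,t}=y$. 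Once such numbers exist, rationalization is routine: pick any utility functions inducing the chosen thresholds, take the prior $p_0$ arbitrary (say $1/2$), keep the state time-invariant ($\gamma_t$ the identity), and at each period choose a binary experiment $\pi_t$ with two signals whose likelihood ratio is calibrated so that Bayesian updating from $p_{t-1}$ lands exactly on $p_t$ — this is always possible since for a binary state any posterior in $(0,1)$ is reachable from any prior in $(0,1)$ by some signal realization. So the crux is the existence of the numbers.

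To build the beliefs and thresholds, the key structural observation is that the no-cycle condition induces a consistent ordering. Define a relation on the union $\mI\sqcup\mT$ (agents and periods): declare $t\succ i$ whenever $x_{i,t}=x$ (the belief at $t$ must exceed $i$'s threshold) and $i\succ t$ whenever $x_{i,t}=y$. I claim the no-cycle hypothesis is exactly what rules out a $2$-cycle $t_1\succ i\succ t_2\succ j\succ t_1$ in this relation, and more generally I would show the transitive closure of $\succ$ is a strict partial order — equivalently, that there is no directed cycle at all. The argument for acyclicity: any directed cycle must alternate between periods and agents (since the only edges are period-to-agent and agent-to-period), so it has the form $t_1\succ i_1\succ t_2\succ i_2\succ\cdots\succ t_k\succ i_k\succ t_1$; from $t_{r}\succ i_{r}\succ t_{r+1}$ we read off $x_{i_r,t_r}=x$ and $x_{i_r,t_{r+1}}=y$, so along the cycle each agent picks $x$ at its ``incoming'' period and $y$ at its ``outgoing'' period — and then among the periods $t_1,\dots,t_k$ a short combinatorial argument (e.g. take the period where the belief would have to be maximal) produces two agents and two periods forming a cycle in the sense of \cref{def:cycles}. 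Having established that $\succ$ extends to a linear order on $\mI\sqcup\mT$, I assign to each element a distinct real number respecting that order, then rescale monotonically into the open interval $(0,1)$; reading off the values gives $(p_t)$ and $(\overline{u}_i)$ with all the required strict inequalities.

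The main obstacle is the combinatorial lemma in the previous paragraph: showing that a directed cycle in the alternating relation $\succ$ forces an actual two-period, two-agent cycle in the data. A cycle in $\succ$ of length $2k$ gives $k$ periods and $k$ agents with a cyclic pattern of $x$'s and $y$'s, and one must extract a length-$4$ sub-cycle. I expect this to follow by choosing, among the $k$ periods, one at which the ``required belief'' is (weakly) largest — concretely, the period $t_r$ such that $t_r\succ i_r$ and $t_{r-1}\succ i_{r-1}$ with $i_{r-1}\succ t_r$, giving an agent $i_{r-1}$ who plays $y$ at $t_r$; pairing this with the agent $i_r$ who plays $x$ at $t_r$ and tracing one more step should close up a $4$-cycle. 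Once this lemma is in hand, everything else — the order extension, the embedding into $(0,1)$, and the back-translation into utilities and experiments — is standard and I would present it compactly.
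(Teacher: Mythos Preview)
Your approach is correct but takes a different route from the paper's. You build a single bipartite relation on $\mI\sqcup\mT$ and then need the combinatorial lemma that any directed cycle in it collapses to a length-four cycle; the paper instead defines a relation \emph{only on agents}, setting $j\succcurlyeq i$ whenever $x_{j,t}=x\Rightarrow x_{i,t}=x$ for all $t$. The no-cycle hypothesis makes this relation complete in one line (if $j\not\succcurlyeq i$ and $i\not\succcurlyeq j$, the two witnessing periods are literally a cycle in the sense of \cref{def:cycles}), transitivity is immediate from the definition, and a utility-representation step gives the thresholds $\overline u_i$; beliefs $p_t$ are then chosen afterwards, squeezed between $\max\{\overline u_i:x_{i,t}=x\}$ and $\min\{\overline u_j:x_{j,t}=y\}$. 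So the paper never confronts your ``main obstacle'' at all---no long-cycle reduction is needed---and as a bonus it extracts the agent ordering $\succcurlyeq$ directly, which the paper later uses for the discrimination application. Your joint-ordering approach is conceptually tidy (beliefs and thresholds treated symmetrically), but it trades that extra combinatorial step for the unification.

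On that step: your sketch is a little muddled---the ``period where the belief would have to be maximal'' heuristic is circular, since beliefs are what you are trying to construct. The clean argument is an induction on cycle length via a chord: given a minimal $\succ$-cycle $t_1\succ i_1\succ t_2\succ i_2\succ\cdots\succ t_k\succ i_k\succ t_1$ with $k\ge 3$, look at the pair $(i_1,t_3)$. If $t_3\succ i_1$ then $t_3\succ i_1\succ t_2\succ i_2\succ t_3$ is a $4$-cycle; if $i_1\succ t_3$ then $t_1\succ i_1\succ t_3\succ i_3\succ\cdots\succ t_1$ is a cycle of length $2(k-1)$. Either way you drop below $2k$, and induction finishes.
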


\medskip

We present a simple proof of \cref{thm:baseline} as it will be useful for the discussion that follows. 

\begin{proof} $(\Rightarrow).$ Already shown. 

\noindent $(\Leftarrow).$ We define the following binary relation $\succcurlyeq$ on the set of agents $\mI$. For any $(i,j)\in\mI \times \mI$, 
	$$j \succcurlyeq i \qquad \text{ if, for all $t\in\mT$, we have } \qquad x_{j,t}=x\implies x_{i,t}=x.$$
	In words, $j\succcurlyeq i$ if whenever agent $j$ chooses $x$, so does agent $i$. 
	
	We  argue that, since $\mX$ has no cycles, $\succcurlyeq$ is complete. By contradiction, assume that there exists a pair $(i,j)$ such that $j\not\succcurlyeq i$ and $i\not\succcurlyeq j$. Since $j\not\succcurlyeq i$, there exists $t \in \mT$ such that $x_{j,t}=x$ and $x_{i,t}=y$. Since $i\not\succcurlyeq j$, there exists $t' \in \mT$ such that $x_{i,t'}=x$ and $x_{j,t'}=y$. We thus have a cycle, the required contradiction. \medskip
	
	We now argue that  $\succcurlyeq$ is transitive. This is because for any $(i,j,k) \in\mI$, if $a_{j,t}=x\implies a_{i,t}=x$ for all $t\in\mT$ (and so $j\succcurlyeq i$) and $a_{k,t}=x\implies a_{j,t}=x$ for all $t\in\mT$ (and so $k\succcurlyeq j$), then we must have $a_{k,t}=x\implies a_{i,t}=x$ for all $t\in\mT$ (and so $k\succcurlyeq i$).\medskip

	Since $\succcurlyeq$ is complete and transitive, we can find numbers $\overline{u}_i\in (0,1)$ for all $i\in\mI$ (by invoking a textbook utility representation theorem for finite alternatives) such that
	$$\overline{u}_j \geq \overline{u}_i\;\; \text{ if and only if\;} \;\;j\succcurlyeq i.$$\medskip 
	
	We now argue that if $x_{i,t}=x$ and $x_{j,t}=y$, then $\overline{u}_j > \overline{u}_i$. By contradiction, if $\overline{u}_i \geq \overline{u}_j$, then $i \succcurlyeq j$ and, therefore, $x_{i,t}=x$ implies that $x_{j,t}=x$, a contradiction. It follows that $$\min\left\{ \overline{u}_j \;| \; x_{j,t}=y\} > \max\{ \overline{u}_i \;| \; x_{i,t}=x\right\}$$ for all $t\in\mT$.\medskip
	
	Finally, let   $p_0=\frac12$ and, for each $t\in \mT$, choose any $p_t\in(0,1)$ such that 
	$$\min\left\{ \overline{u}_i \;| \; x_{i,t}=y\}>p_t>\max\{ \overline{u}_j \;| \; x_{j,t}=x\right\}.$$ By construction, the data is rationalized with these beliefs and thresholds.\medskip 
	
	We can easily construct utility functions with the required thresholds, for instance, by setting $u_i(x,\omega^x)=u_i(y,\omega^y)=1$, $u_i(x,\omega^y)=1/2$ and $u_i(y,\omega^x)= (3 \overline{u}_i-1)/2\overline{u}_i$.\medskip
	
	It remains to construct the transitions probabilities, signals and experiments that generate the given beliefs. So first, we pick transition probabilities $\gamma_t(\omega^x|\omega^x)=\gamma_t(\omega^y|\omega^y)=1$ for all $t\in\mT$ that correspond to the state being time invariant. Then, for all $t$, pick an experiment with binary signals $S_t=\{s,s'\}$ and choose any $\pi_t(s|\omega^x) \in (0,1)$ and $\pi_t(s|\omega^y)\in (0,1)$ such that
	$$\frac{\pi_t(s|\omega^x)}{\pi_t(s|\omega^y)}=\frac{p_{t}}{1-p_{t}}\frac{1-p_{t-1}}{p_{t-1}}$$
	and assign $\pi_t(s'|\omega^x)=1-\pi_t(s|\omega^x)$ and $\pi_t(s'|\omega^y)=1-\pi_t(s|\omega^y)$.\medskip
	
	It is routine to verify that $p_t$ is obtained from $p_{t-1}$ by Bayesian updating when signal $s$ is realized. This completes the proof.
\end{proof}

\medskip

We end this section by discussing some implications of \cref{thm:baseline}. First, note that the proof shows that, for rationalizable choice data, the actions reveal information about the preference cutoffs via the relation $\succcurlyeq$. Thus, we can determine when an agent $i$ has a higher preference cutoff than $j$ (if $i \succcurlyeq j$ and $j\not\succcurlyeq i$). This preference information is potentially useful; for instance in \cref{sec:app_discrimination}, we argue that it can be used to uncover evidence of discrimination.\medskip

Second,  observe that the proof shows that every rationalizable data set is also rationalizable with a time-invariant state. In other words, while the model is refutable, we cannot distinguish between environments where the state is time varying or invariant. Similarly, the proof shows that any beliefs that rationalize the choice data can be obtained by Bayesian updating. Therefore, Bayesian updating itself is not  testable. So, as long as agents have common beliefs, they could be using \textit{any} method to form their beliefs. All we need is for the agents to change their beliefs as they receive informative signals.\medskip 

While these observations follow almost immediately from our definition of rationalizability, we nonetheless view them to be important conceptual insights of our framework. They demonstrate the limits of the behavioral analysis of learning that can be conducted from finite choice data which is, more often than not, the available field data. In this sense, these insights are akin to Afriat's theorem \citep{afriat1967} that shows any price quantity data that is consistent with utility maximization with a locally nonsatiated utility function is also consistent with utility maximization with a concave utility (so the convexity of preferences is not testable).

\section{Generalizing the baseline model}\label{sec:generalizations}

This section generalizes the baseline model along several different directions and derives the analogue of \cref{thm:baseline} for each of these generalizations. Each subsection generalizes the baseline model along one particular dimension, but it will be clear that we can also simultaneously allow for all, or a subset of, these distinct generalizations.

\subsection{General preferences}\label{sec:gen_prefs}

When we defined rationalization in \cref{sec:model}, we assumed that utilities were such that agents prefer to match their action to the underlying state. Needless to say, this assumption might not always be appropriate. Note that allowing for arbitrary utilities does not change the cutoff structure of optimal behavior: an agent chooses $x$ if his belief is above or below a cutoff. Hence, with general preferences, we can define rationalization as follows.

\begin{definition}
	The choice data $\mathcal{X}$ is \textbf{rationalizable with general preferences} if, for all $i\in\mI$ and $t\in\mT$, there exist utility functions $u_i$, constants $\eta_i\in\{-1,1\}$, transition probabilities $\gamma_t$, prior belief $p_{0}$, beliefs $p_{t}$, sets of signals $S_t$, experiments $\pi_t: \{\omega^x,\omega^y\} \rightarrow \Delta(S_t)$ and realized signals $s_t \in S_t$ such that 
	\begin{enumerate}[label=\roman*.]		
		\item \textit{Actions maximize utility:} $\eta_i p_{t} > \overline{u}_i$ when $x_{i,t}=x$ (resp., $\eta_i p_{t} < \overline{u}_i$ when $x_{i,t}=y$).
		\item \textit{Bayesian Updating:} $p_{t}$ is derived from $p_{t-1}$ by Bayes' rule.
	\end{enumerate}
\end{definition}
The difference is captured in point i. The optimal behavior of an expected utility maximizer with general preferences is characterized by two numbers $\overline{u}_i$ and $\eta_i$; the former determines the cutoff belief and the latter whether agents prefer to match or mismatch the state.

The characterization of choice data that is rationalizable with general preferences follows immediately from \cref{thm:baseline}. To formally state the result, we need one additional piece of notation. Let $\kappa_i:\{x,y\}\to \{x,y\}$ be a permutation (bijection) on the actions of agent $i$, and $\kappa=(\kappa_i)_{i\in \mI}$ be a profile of permutations, which we refer to simply as a \textit{permutation}. For any permutation $\kappa$, let $\mX^{\kappa}:=(x^{\kappa}_{i,t})_{(i,t) \in \mI \times \mT}$ be the \textit{permuted choice data} obtained from $\mathcal{X}$ by permuting the choices; that is, $x^{\kappa}_{i,t}=\kappa_i(x_{i,t})$ for all $(i,t)$.

\begin{corollary}\label{cor:gen_prefs}
	Choice data $\mathcal{X}$ is rationalizable with general preferences if, and only if, we can find a permutation $\kappa$ such that the permuted choice data $\mathcal{X}^{\kappa}$ has no cycles.
\end{corollary}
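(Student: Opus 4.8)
The plan is to reduce rationalizability with general preferences to rationalizability of a suitably permuted data set, and then invoke Theorem \ref{thm:baseline}. The key observation is that the constant $\eta_i \in \{-1,1\}$ in the definition plays exactly the role of relabeling agent $i$'s two actions. Concretely, if $\eta_i = 1$ then agent $i$ behaves as in the baseline model (choose $x$ when the belief is above the cutoff), whereas if $\eta_i = -1$ then agent $i$ chooses $x$ precisely when $-p_t > \overline{u}_i$, i.e.\ when $p_t < -\overline{u}_i$; writing $\overline{v}_i := -\overline{u}_i$ (which, after shifting, we can take to be an interior cutoff), this is the same as saying agent $i$ chooses $y$ when the belief is above $\overline{v}_i$. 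In other words, an agent with $\eta_i = -1$ is observationally identical to a baseline agent whose action labels $x$ and $y$ have been swapped.

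First I would make this precise. Given a permutation $\kappa = (\kappa_i)_{i \in \mI}$, define $\eta_i = 1$ if $\kappa_i$ is the identity and $\eta_i = -1$ if $\kappa_i$ is the transposition of $x$ and $y$. I would then show the equivalence by a direct translation of the model's parameters: the transition probabilities $\gamma_t$, the prior $p_0$, the beliefs $p_t$, the signal sets $S_t$, the experiments $\pi_t$, and the realized signals $s_t$ are shared between the two problems unchanged; only the cutoffs and the action labels are adjusted. Specifically, $\mathcal{X}$ is rationalizable with general preferences using constants $(\eta_i)$ if and only if $\mathcal{X}^{\kappa}$ (with $\kappa$ the matching permutation) is rationalizable in the baseline sense. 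Checking this amounts to verifying that condition i of the general-preferences definition for $x_{i,t}$ is equivalent to condition i of Definition \ref{def:rat} for $x^{\kappa}_{i,t} = \kappa_i(x_{i,t})$, after replacing the cutoff $\overline{u}_i$ by an appropriate interior cutoff in $(0,1)$ for the permuted agent; condition ii (Bayesian updating) is literally identical since it does not involve actions or cutoffs at all. One technical wrinkle is that the general-preferences definition allows $\overline{u}_i$ to be any real number (through $u_i$), while the baseline definition requires an interior cutoff, but since the baseline proof only ever needs cutoffs in $(0,1)$ and any finite family of strict inequalities $\eta_i p_t \gtrless \overline{u}_i$ can be preserved under an affine rescaling of beliefs, this causes no difficulty.

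Once the equivalence is established, the corollary follows immediately: $\mathcal{X}$ is rationalizable with general preferences iff there exist constants $(\eta_i)$ making it rationalizable, iff the corresponding $\mathcal{X}^{\kappa}$ is rationalizable in the baseline sense for the matching $\kappa$, iff (by Theorem \ref{thm:baseline}) $\mathcal{X}^{\kappa}$ has no cycles. Since the map $\eta \mapsto \kappa$ ranges over all profiles of permutations as $\eta$ ranges over $\{-1,1\}^{\mI}$, this is exactly the statement "there exists a permutation $\kappa$ such that $\mathcal{X}^{\kappa}$ has no cycles."

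The only step that requires genuine care — the "main obstacle," though it is more bookkeeping than conceptual difficulty — is the translation in the second paragraph: one must be careful that flipping the sign via $\eta_i = -1$ really does correspond to swapping the roles of the two actions, and that the cutoff can be relocated to $(0,1)$ without disturbing the (finitely many) strict belief inequalities. Everything else is an immediate consequence of Theorem \ref{thm:baseline}.
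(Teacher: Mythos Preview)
Your proposal is correct and follows essentially the same approach as the paper, which treats the corollary as immediate: agents with $\eta_i=-1$ are exactly those whose action labels should be swapped, reducing the problem to Theorem~\ref{thm:baseline}. The paper does not spell out the cutoff-relocation bookkeeping you mention, but your handling of it is sound and the overall argument is the intended one.
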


This result is very intuitive. If the data is rationalizable with general preferences, it must mean that we can switch the choices of every agent for whom $\eta_i=-1$ and the permuted choice data must be rationalizable in the sense of \cref{thm:baseline}. The next example generalizes \cref{eg:main_example} to show that not every choice data set is rationalizable with general preferences.

\begin{example}\label{eg:gen_prefs}
	Consider the choice data $\mX$ of two individuals $i$, $j$ and four time periods represented in the following table:
	\begin{center}
		\begin{tabular}{|c|c|c|}\hline
			& $i$ & $j$     \\ \hline
			$t_1$ &\cellcolor{blue!25} $x$ & \cellcolor{blue!25} $y$     \\ \hline
			$t_2$ &\cellcolor{blue!25} $y$ & \cellcolor{blue!25} $x$\\ \hline
			$t_3$ & $x$ & $x$\\ \hline
			$t_4$ & $y$ & $y$\\ \hline
		\end{tabular}
	\end{center}		

	\medskip
	
	Observe that the choices in the shaded cells are identical to those of \cref{eg:main_example} and constitute a cycle. \cref{cor:gen_prefs} states that the data cannot be rationalized with general preferences when there is a cycle for every permutation of the choice data. Clearly, the cycle in periods $t_1$ and $t_2$ can be removed by permuting the choices of either agent. However, this would then introduce a cycle into periods $t_3$ and $t_4$. Below, we display the choice data from the three possible profiles $(\kappa_i,\kappa_j)$ of permutations other than the identity permutation (which does not change the choice data): from left to right, the permutations are on the choices of agent $i$ only, agent $j$ only and both. Shaded cells highlight the cycles and this demonstrates that the choice data in this example is not rationalizable with general preferences.
	
	\medskip
	
	\begin{center}		
		\begin{tabular}{|c|c|c|}\hline
			& $i$ & $j$     \\ \hline			
			$t_1$ & $y$ & $y$\\ \hline
			$t_2$ & $x$ & $x$\\ \hline
			$t_3$ &\cellcolor{blue!25} $y$ & \cellcolor{blue!25} $x$     \\ \hline
			$t_4$ &\cellcolor{blue!25} $x$ & \cellcolor{blue!25} $y$\\ \hline
		\end{tabular}	
		\qquad\qquad\qquad
		\begin{tabular}{|c|c|c|}\hline
			& $i$ & $j$     \\ \hline
			$t_1$ & $x$ & $x$\\ \hline
			$t_2$ & $y$ & $y$\\ \hline
			$t_3$ &\cellcolor{blue!25} $x$ & \cellcolor{blue!25} $y$ \\ \hline
			$t_4$ &\cellcolor{blue!25} $y$ & \cellcolor{blue!25} $x$\\ \hline
		\end{tabular}
		\qquad\qquad\qquad
		\begin{tabular}{|c|c|c|}\hline
			& $i$ & $j$     \\ \hline
			$t_1$ &\cellcolor{blue!25} $y$ & \cellcolor{blue!25} $x$   \\ \hline
			$t_2$ &\cellcolor{blue!25} $x$ & \cellcolor{blue!25} $y$\\ \hline
			$t_2$ & $y$ & $y$\\ \hline
			$t_2$ & $x$ & $x$\\ \hline
		\end{tabular}
	\end{center}
\end{example}

\subsection{More than two alternatives}\label{sec:mult_actions}

There are many situations where there may be more than two alternatives despite the state being binary. For instance, a member of a committee may choose to abstain as opposed to voting for/against a proposal. 

\medskip

We begin by discussing the case where the alternatives are ordered. So suppose there are $N$ alternatives $\{y_1,\dots, y_N\}$ and each agent $i$ picks $x_{i,t}\in \{y_1,\dots, y_N\}$ in period $t$. Every agent $i$'s preferences are captured by cutoffs $0=:\overline{u}_{i,1}<\cdots<\overline{u}_{i,N+1}:=1$ and the agent chooses action $x_{i,t}=y_n$ whenever $\overline{u}_{i,n}<p_t<\overline{u}_{i,n+1}$.\medskip

Because agents are expected utility maximizers, for any utility function $u_i$, their optimal choices will always take the above cutoff form for \textit{some} ordering of the alternatives. Therefore, we are implicitly making two assumptions above. The first is that all agents order alternatives in the same way (although their belief cutoffs might differ). The second is that alternative $y_1$ is picked at the lowest beliefs and vice versa for $y_N$. As with \cref{thm:baseline}, these assumptions can be dispensed with and we describe how to do so immediately after the statement of \cref{thm:mult_actions} below.

\medskip

Rationalization with multiple alternatives is analogously defined to \cref{def:rat}.
\begin{definition}
	The choice data $\mathcal{X}$ is \textbf{rationalizable with multiple alternatives} if, for all $i\in\mI$ and $t\in\mT$, there exist utility functions $u_i$, transition probabilities $\gamma_t$, prior belief $p_{0}$, beliefs $p_{t}$, sets of signals $S_t$, experiments $\pi_t: \{\omega^x,\omega^y\} \rightarrow \Delta(S_t)$ and realized signals $s_t \in S_t$ such that 
	\begin{enumerate}[label=\roman*.]		
		\item \textit{Actions maximize utility:} $\overline{u}_{i,n}<p_t<\overline{u}_{i,n+1}$, when $x_{i,t}=y_n$,
		\item \textit{Bayesian Updating:} $p_{t}$ is derived from $p_{t-1}$ by Bayes' rule.
	\end{enumerate}
\end{definition}

Similarly, we can define what it means to have a cycle when there are multiple alternatives. We use the same terminology as this is the obvious generalization of \cref{def:cycles}.
\begin{definition}\label{def:cycles_mult_actions}
	Choice data $\mX$ with multiple alternatives has a \textbf{cycle} if there are distinct time periods $(t_1,t_2)\in \mT \times \mT$ and agents $(i,j)\in \mI \times \mI$ such that $x_{i,t_1}=y_{n_{i,t_1}}$, $x_{j,t_1}=y_{n_{j,t_1}}$ and $x_{i,t_2}=y_{n_{i,t_2}}$, $x_{j,t_2}=y_{n_{j,t_2}}$ where $n_{i,t_1}>n_{i,t_2}$ and $n_{j,t_1}<n_{j,t_2}$.
\end{definition}
\noindent Note that the above definition implicitly leverages the order of alternatives that we fixed when we defined preferences.

\medskip

The absence of cycles is a necessary condition for rationalization even with multiple alternatives. To see this, observe that when agent $i$ switches from action $y_{n_{i,t_1}}$ to $y_{n_{i,t_2}}$ where $n_{i,t_1}>n_{i,t_2}$, this implies that we must have $p_{t_1}>\overline{u}_{i,n_{i,t_1}}>p_{t_2}$. However, the choices of agent $j$ imply exactly the opposite inequality for the beliefs $p_{t_1}<\overline{u}_{j,n_{j,t_2}}<p_{t_2}$, which generates the required contradiction. Indeed, like \cref{thm:baseline}, this condition is also sufficient.

\begin{theorem}\label{thm:mult_actions}
	Choice data $\mathcal{X}$ is rationalizable with multiple alternatives if, and only if, it has no cycles.
\end{theorem}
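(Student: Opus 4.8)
The $(\Rightarrow)$ direction is exactly the argument given just before the statement, so the plan is to establish sufficiency, mirroring the proof of \cref{thm:baseline} but with the bookkeeping carried out over agent--threshold pairs rather than agents. First I would reduce the problem to the feasibility of a system of strict inequalities. Introduce a directed graph $G$ whose vertices are the pairs $(i,n)$ with $i\in\mI$ and $n\in\{2,\dots,N\}$, standing for the unknown cutoffs $\overline{u}_{i,n}$, together with one extra vertex $q_t$ for each $t\in\mT$, standing for the unknown belief $p_t$. Put an edge $(i,n)\to(i,n+1)$ for every $i$ and every $2\le n\le N-1$ (monotonicity of an agent's cutoffs); and, for each period $t$ and agent $i$ with $x_{i,t}=y_{n_{i,t}}$, an edge $(i,n_{i,t})\to q_t$ when $n_{i,t}\ge 2$ and an edge $q_t\to(i,n_{i,t}+1)$ when $n_{i,t}\le N-1$ (the agent's cutoffs straddle the period-$t$ belief). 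Reading each edge $u\to v$ as the constraint ``value of $u<$ value of $v$'', an assignment of numbers in $(0,1)$ to all vertices satisfying every constraint exists if and only if $G$ is acyclic; in that case any topological order of $G$ realized by strictly increasing values in $(0,1)$ delivers cutoffs $\overline{u}_{i,n}$ (completed by $\overline{u}_{i,1}:=0$ and $\overline{u}_{i,N+1}:=1$) and beliefs $p_t$ meeting requirement i of the definition.

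The heart of the argument --- and the step I expect to be the main obstacle --- is to show that the absence of cycles in the sense of \cref{def:cycles_mult_actions} forces $G$ to be acyclic. Suppose $G$ had a directed cycle. Since the only edges among the $(i,\cdot)$ vertices increase the second coordinate, the cycle cannot avoid the vertices $q_t$, and it must in fact pass through at least two distinct such vertices (a cycle through a single $q_t$ would force $n_{i,t}+1\le n_{i,t}$); say they are $q_{t_1},\dots,q_{t_m}$, $m\ge 2$, with $t_1,\dots,t_m$ distinct. Between two consecutive belief-vertices $q_{t_a}$ and $q_{t_{a+1}}$ on the cycle, the available edges force the intervening path to stay within a single agent $i_a$ and to move its second coordinate strictly upward, from $n_{i_a,t_a}+1$ to $n_{i_a,t_{a+1}}$; hence $n_{i_a,t_a}<n_{i_a,t_{a+1}}$ for every $a$ (indices mod $m$). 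On the other hand, the hypothesis that $\mX$ has no cycle says precisely that for any two periods $s,s'$ it cannot happen that one agent's index strictly rose and another agent's index strictly fell in passing from $s$ to $s'$; equivalently, the relation $s\preceq s'\iff n_{i,s}\le n_{i,s'}$ for all $i$ is a total preorder on $\mT$. Each link $n_{i_a,t_a}<n_{i_a,t_{a+1}}$ rules out $t_{a+1}\preceq t_a$, so by totality $t_a\prec t_{a+1}$ strictly, and chaining around the cycle gives $t_1\prec t_2\prec\dots\prec t_m\prec t_1$, a contradiction. Thus $G$ is acyclic and cutoffs and beliefs as above exist.

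It remains to manufacture the rest of the model from these numbers, which is routine and parallels the proof of \cref{thm:baseline}. For each $i$ I would exhibit a utility function whose induced cutoffs are the prescribed $\overline{u}_{i,1}<\dots<\overline{u}_{i,N+1}$: since the period payoff of $y_n$ at belief $p$ is the affine function $p\,u_i(y_n,\omega^x)+(1-p)\,u_i(y_n,\omega^y)$, it suffices to choose $N$ affine functions on $[0,1]$ whose upper envelope is attained successively by $y_1,\dots,y_N$ with breakpoints at the $\overline{u}_{i,n}$, a standard construction. Finally, as in \cref{thm:baseline}, take $p_0=\frac{1}{2}$, the time-invariant transitions $\gamma_t(\omega^x\,|\,\omega^x)=\gamma_t(\omega^y\,|\,\omega^y)=1$, and for each $t$ a binary experiment $S_t=\{s,s'\}$ with $\pi_t(s\,|\,\omega^x)/\pi_t(s\,|\,\omega^y)=\frac{p_t}{1-p_t}\frac{1-p_{t-1}}{p_{t-1}}$ (and $\pi_t(s'\,|\,\omega)=1-\pi_t(s\,|\,\omega)$), so that $p_t$ is the Bayesian posterior after $s$; this verifies requirement ii. The only genuinely new content is the acyclicity lemma of the second paragraph; everything else is bookkeeping plus the constructions already used for the baseline model.
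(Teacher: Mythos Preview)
Your argument is correct, and the overall strategy---reduce rationalizability to the acyclicity of a finite directed relation---matches the paper's. The execution, however, differs in two respects that are worth noting.

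First, you enlarge the vertex set to include a node $q_t$ for each period, so a single topological sort simultaneously produces the cutoffs $\overline{u}_{i,n}$ and the beliefs $p_t$. The paper works only with pairs $(i,n)$, proves the induced relation $\succ$ is acyclic, invokes Suzumura's extension theorem to obtain a complete transitive order (hence numbers $\overline{u}_{i,n}$), and then argues separately that $\min\{\overline{u}_{i,n+1}:x_{i,t}=y_n\}>\max\{\overline{u}_{j,n'}:x_{j,t}=y_{n'}\}$ so that a $p_t$ in between exists. Your packaging avoids this second step and the appeal to Suzumura; for a finite graph, topological sort is all that is needed.

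Second, your acyclicity proof passes through a clean auxiliary observation: the no-cycle condition of \cref{def:cycles_mult_actions} is equivalent to the componentwise relation $t\preceq t'\iff n_{i,t}\le n_{i,t'}$ for all $i$ being a \emph{total} preorder on $\mT$, and then any putative graph-cycle yields $t_1\prec t_2\prec\dots\prec t_m\prec t_1$. The paper instead tracks, step by step along a hypothetical $\succ$-cycle $(i_1,n_1)\succ\dots\succ(i_k,n_k)\succ(i_1,n_1)$, how agent $i_1$'s action index must weakly decrease across the witnessing periods, eventually producing a two-period cycle in the sense of \cref{def:cycles_mult_actions}. Both arguments are short; yours isolates the structural content (totality of $\preceq$) more transparently, while the paper's keeps the revealed-preference flavor of the baseline proof visible in the ordering of the pairs $(i,n)$.
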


\noindent The proof of \cref{thm:mult_actions} can be found in \cref{app:mult_actions}.

\medskip

We can relax the particular ordering of the actions that we chose. As we discussed above, any utility function $u_i$ implies cutoff behavior for some order of actions and so, as in \cref{cor:gen_prefs}, we can define agent specific permutations $\kappa_i:\{y_1,\dots,y_n\}\to \{y_1,\dots,y_n\}$ and consider the permuted choice data $\mathcal{X}^{\kappa}$. Then, the choice data is rationalizable with multiple alternatives and general preferences if, and only if, we can find a permutation $\kappa$ such that the permuted choice data $\mathcal{X}^{\kappa}$ has no cycles.

\subsection{Relaxing common beliefs}\label{sec:comonotone}

Perhaps the strongest assumption in the baseline model is that agents have common beliefs.  This section relaxes this assumption.  As we mentioned earlier, every choice data set is rationalizable if we allow for arbitrary private signals and, thus, for arbitrary private beliefs.  There needs to be some relation between the information of the agents in order for the model to have non-trivial testable implications.\medskip

The restriction we impose is that agents receive  ``co-monotone'' private signals. Informally, an experiment is co-monotone if the joint distribution of the signals is such that agents agree on the state that their individual signal is more likely to arise from, but disagree on the strength of the signal. As as example, this assumption captures a deliberative committee where agents all agree that the information is good news for one of the two states, but disagree in terms of how good the news actually is.

\medskip

Formally, the experiment $\pi_t: \{\omega^x,\omega^y\} \rightarrow \Delta(\times_{i \in \mI} S_{i,t})$ is \textit{co-monotone} if for all  profiles of signals $(s_{1,t},\dots,s_{I,t})$ in the support of $\pi_t$, 
\[[\pi_{i,t}(s_{i,t}|\omega^x)-\pi_{i,t}(s_{i,t}|\omega^y)][\pi_{j,t}(s_{j,t}|\omega^x)-\pi_{j,t}(s_{j,t}|\omega^y)]\geq 0 \]
for all $(i,j)$, where $\pi_{i,t}(\cdot|\omega)$ is the marginal of $\pi_t(\cdot|\omega)$ over $S_{i,t}$.\footnote{The support of the kernel $\pi_t$ is the union of the supports of each probability $\pi_t(\cdot|\omega)$.}  Experiments are \textit{strictly} co-monotone if the above inequalities are all strict.

\medskip

Before defining the rationalization criterion, we present the main implication of assuming co-monotone experiments. To do so, we need some additional notation. We let $p_{i,t}$ be the belief about the event $[\omega_t=\omega^x]$ of agent $i$ at period $t$ after having received the private signal $s_{i,t}$. Similarly, we let 
$$q_{i,t+1}:=p_{i,t}\gamma_{t+1}(\omega^x|\omega^x)+(1-p_{i,t})\gamma_{t+1}(\omega^x|\omega^y)$$
be the belief about the event $[\omega_{t+1}=\omega^x]$ of agent $i$ at period $t+1$ prior to receiving  the private signal $s_{i,t+1}$.
An immediate implication of assuming co-monotone experiments is that, for any two Bayesian agents $i$ and $j$, 
\begin{equation}\label{eq:comon}
	[p_{i,t}> q_{i,t} \;\; \implies \;\; p_{j,t}\geq  q_{j,t}] \;\text{and\;} [p_{i,t}< q_{i,t} \;\; \implies \;\; p_{j,t}\leq  q_{j,t}]. 
\end{equation}
 In words, the private beliefs are co-monotone. 
Conversely, for all posteriors $p_{i,t}$ and $p_{j,t}$ that satisfy \eqref{eq:comon}, we can find a co-monotone experiment and signals $(s_{i,t},s_{j,t})$, which generate these posteriors. (See the proof of  \cref{thm:comon_inv} for details.)  
\medskip

We now define the rationalization criterion. Throughout, we write $S_t$ for the set of signals $\times_{i \in \mI} S_{i,t}$.

\begin{definition}\label{def:comonotone}
	The choice data $\mathcal{X}$ is \textbf{rationalizable with (strictly) co-monotone experiments} if there exist utility functions $u_i$, transition probabilities $\gamma_t$, prior $p_{0}$, beliefs $p_{i,t}$, sets of signals $S_{i,t}$, (strictly) co-monotone experiments $\pi_{t}: \{\omega^x,\omega^y\} \rightarrow \Delta(S_t)$ and realized signals $s_t \in S_t$ such that 
	\begin{enumerate}[label=\roman*.]
		\item \textit{Actions maximize utility:} $p_{i,t} >  \overline{u}_i$ if $x_{i,t}=x$ (resp., $p_{i,t} <  \overline{u}_i$ if $x_{i,t}=y$).
		\item \textit{Bayesian Updating:} $p_{i,t}$ is derived from $p_{i,t-1}$ by Bayes' rule conditioning on $s_{i,t}$.
	\end{enumerate}
\end{definition}

We stress that all the information an agent receives is encoded in the private signal. Since private signals are not correlated with past decisions, states and signals, this implies that agents neither observe the past decisions of others nor their private signals. This assumption is appropriate in some applications, less so in others. We can allow for general experiments, which condition on the current state \emph{and} past signals, actions, and states. As long as we restrict attention to co-monotone experiments, the same testable implications would obtain.\medskip 

 We first argue that the model has testable implications when the state is \textit{time-invariant}. To do so, we revisit \cref{eg:main_example} making one minor alteration.
\begin{example}\label{eg:comonotone}
	Consider the choice data $\mX$ of two individuals $i$ and $j$ and two \textit{consecutive} time periods $t$ and $t+1$ represented in the following table:
	\begin{center}
		\begin{tabular}{|c|c|c|}\hline
			& $i$ & $j$     \\ \hline
			$t$ & $x$ & $y$     \\ \hline
			$t+1$ & $y$ & $x$\\ \hline
		\end{tabular}
	\end{center}

	\medskip
	
	\noindent Recall that \cref{eg:main_example} had exactly this pattern of choices, but with arbitrary time periods.
		
	We argue that this choice data cannot be rationalized with co-monotone experiments and a time-invariant state. Suppose to the contrary that we could. This implies that we can find thresholds and beliefs for agent $i$ that satisfy
	\begin{align*}
		& p_{i,t} >\overline{u}_i > p_{i,t+1}.
	\end{align*}
	Since agent $i$'s belief decreases, it must be the case that the information she receives satisfies $\pi_{i,t}(s_{i,t}|\omega^x)<\pi_{i,t}(s_{i,t}|\omega^y)$. Since experiments are co-monotone, agent $j$'s information must satisfy $\pi_{j,t}(s_{j,t}|\omega^x)\leq \pi_{j,t}(s_{j,t}|\omega^y)$ for all signals $s_{j,t}$ that agent $j$ can receive with positive probability. This implies that agent $j$'s belief cannot increase and so we cannot have
	\begin{align*}
		& p_{j,t} < \overline{u}_j < p_{j,t+1},
	\end{align*}
	the required contradiction.
\end{example}

\medskip

The following definition provides a name to such consecutive time period cycles.
\begin{definition}\label{def:con_cycles}
	Choice data $\mX$ has a \textbf{consecutive cycle} if there are agents $(i,j) \in \mI \times \mI$ and a time period $t\in \mT$ such that $x_{i,t}=x$, $x_{j,t}=y$ and $x_{i,t+1}=y$, $x_{j,t+1}=x$.
\end{definition}

\medskip

\cref{eg:comonotone} shows that a necessary condition for rationalization is the absence of consecutive cycles in the choice data. The next result shows that this condition is also sufficient. The theorem actually shows a stronger result. First, any data that is rationalizable with co-monotone experiments is also rationalizable with strictly co-monotone experiments. Second, if the choice data is rationalized, then it is rationalizable with an arbitrary profile of utility functions. Put differently, this implies that, if we relax the assumption of common information, then choices no longer provide any ranking of the agents' cutoff beliefs. 

\begin{theorem}\label{thm:comon_inv}
	Fix any arbitrary profile of utility functions  $(u_i)_{i \in \mI}$. Given the profile $(u_i)_{i \in \mI}$, choice data $\mathcal{X}$ is rationalizable with (strictly) co-monotone experiments and a time-invariant state if, and only if, it has no consecutive cycles.
\end{theorem}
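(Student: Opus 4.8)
I would get \emph{necessity} for free: the argument in \cref{eg:comonotone} applies verbatim to an arbitrary consecutive cycle. If $x_{i,t}=x$ and $x_{i,t+1}=y$ then $p_{i,t}>\overline u_i>p_{i,t+1}$, and since the state is time-invariant ($q_{i,t+1}=p_{i,t}$) agent $i$'s period-$(t+1)$ signal is bad news, $\pi_{i,t+1}(s_{i,t+1}\mid\omega^x)<\pi_{i,t+1}(s_{i,t+1}\mid\omega^y)$; co-monotonicity then forces $\pi_{j,t+1}(s_{j,t+1}\mid\omega^x)\le\pi_{j,t+1}(s_{j,t+1}\mid\omega^y)$, so agent $j$'s belief cannot strictly rise and $x_{j,t}=y,\ x_{j,t+1}=x$ is impossible. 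Since strictly co-monotone experiments are a special case of co-monotone ones, this rules out both forms of rationalizability; conversely, to obtain both sufficiency statements at once it is enough to rationalize cycle-free data with \emph{strictly} co-monotone experiments. So I would fix the given $(u_i)_{i\in\mI}$ (with each $\overline u_i\in(0,1)$, as the statement implicitly requires) and construct such a rationalization in three steps.

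\emph{Step 1 (a direction per period).} Set $d_1:=+1$ and, for $t\ge2$, set $d_t:=+1$ if some agent has $x_{i,t-1}=y$ and $x_{i,t}=x$, set $d_t:=-1$ if some agent has $x_{i,t-1}=x$ and $x_{i,t}=y$, and $d_t:=+1$ otherwise. ``No consecutive cycle'' says exactly that the first two cases never occur simultaneously, so $d_t$ is well defined; it is the common direction in which the belief of any agent forced to cross her threshold between $t-1$ and $t$ must move. \emph{Step 2 (beliefs).} I would pick constants $0<\epsilon_i<\tfrac14\min\{\overline u_i,1-\overline u_i\}$ and a drift sequence $\rho_0=0$, $|\rho_t|<\min_i\epsilon_i$, $\operatorname{sign}(\rho_t-\rho_{t-1})=d_t$ (obtained greedily by nudging $\rho_{t-1}$ slightly in direction $d_t$), then set $p_0:=\tfrac12\min_i(\overline u_i-\epsilon_i)$ and, for $t\ge1$, $p_{i,t}:=\overline u_i+\epsilon_i+\rho_t$ if $x_{i,t}=x$ and $p_{i,t}:=\overline u_i-\epsilon_i+\rho_t$ if $x_{i,t}=y$. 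Since $|\rho_t|<\epsilon_i$ we get $p_{i,t}>\overline u_i$ iff $x_{i,t}=x$, which is condition (i). A short case check on $(x_{i,t-1},x_{i,t})$ --- the increment is $\rho_t-\rho_{t-1}$ for a non-switcher and $\pm2\epsilon_i+(\rho_t-\rho_{t-1})$ for a switcher, with both summands carrying the sign $d_t$ by Steps 1 and 2, and the $t=1$ case handled by $p_0<\overline u_i-\epsilon_i$ and $\rho_1>0$ --- shows $\operatorname{sign}(p_{i,t}-p_{i,t-1})=d_t$ for every $i$ and every $t\ge1$.

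\emph{Step 3 (experiments).} I would take the state time-invariant, $\gamma_t(\omega^x\mid\omega^x)=\gamma_t(\omega^y\mid\omega^y)=1$, so that $q_{i,t}=p_{i,t-1}$ and the marginal likelihood ratio agent $i$'s realized period-$t$ signal must carry equals $\ell_{i,t}:=\frac{p_{i,t}(1-p_{i,t-1})}{(1-p_{i,t})p_{i,t-1}}$, which by Step 2 lies on the $d_t$-side of $1$ for every $i$. The task is then to build, for each $t$, a strictly co-monotone $\pi_t:\{\omega^x,\omega^y\}\to\Delta(\times_i S_{i,t})$ with a realized profile at which these ratios are attained; a Bayes computation then delivers condition (ii). Here is the catch: binary signal sets cannot work, because under strict co-monotonicity the support of $\pi_t$ must avoid every profile pairing one agent's good-news realization with another's bad-news realization, which for two-point $S_{i,t}$ confines it to $\{(g_1,\dots,g_I),(b_1,\dots,b_I)\}$ and forces all marginals --- hence, with a common prior, all beliefs --- to coincide. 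Instead I would relabel agents so that $\ell_{1,t}\le\dots\le\ell_{I,t}$ (taking $d_t=+1$; the case $d_t=-1$ is symmetric under $\omega^x\leftrightarrow\omega^y$), give agent $i$ signals $\{g_i^0,g_i^1,b_i\}$, and support $\pi_t$ on the nested profiles $P^k=(g_1^1,\dots,g_k^1,g_{k+1}^0,\dots,g_I^0)$ for $k=0,\dots,I$ together with $(b_1,\dots,b_I)$, with realized profile $P^I$. Writing $\tilde A_i,\tilde B_i$ for the $\omega^x$- and $\omega^y$-masses of $\{P^k:k\ge i\}$, agent $i$'s realized ratio is $\tilde A_i/\tilde B_i$, so I need $\tilde A_i/\tilde B_i=\ell_{i,t}$; a geometrically-decreasing choice of $\tilde A_i$ with enough mass on $P^0$ makes $\tilde B_i=\tilde A_i/\ell_{i,t}$ decreasing as well (this is where the ordering $\ell_{1,t}\le\dots\le\ell_{I,t}$ is used) and keeps every $g_i^0,g_i^1$ genuine good news, so every profile in the support is strictly co-directional. (This also delivers the ``converse'' claim stated just before the theorem. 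For merely co-monotone rationalizability one can skip all of this: let non-switchers receive a fixed uninformative signal and give each switcher a three-point good/neutral/bad experiment, coupled so that every profile in the support is all-good, all-bad, all-neutral, or one active agent against neutrals.)

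The step I expect to be the real obstacle is Step 3 --- manufacturing a \emph{strictly} co-monotone joint experiment whose realized profile carries prescribed, heterogeneous marginal likelihood ratios; Steps 1--2 are essentially bookkeeping once one sees that the no-consecutive-cycle condition is precisely the assertion that the belief movements the data forces never point in opposite directions on the same date.
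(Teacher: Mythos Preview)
Your proof is correct and follows the same overall architecture as the paper's: (1) read off a common belief direction $d_t$ at each date from the no-consecutive-cycle hypothesis, (2) manufacture private beliefs all moving strictly in that direction, (3) build a co-monotone experiment realizing the required marginal likelihood ratios. The packaging differs. The paper argues (2) by induction on $T$, extending a rationalization of the first $T$ periods by one more step; you instead construct the whole belief path at once via the drift $\rho_t$ and the $\epsilon_i$-bands, which is tidier and, incidentally, handles periods with no switchers more cleanly for the \emph{strict} case (the paper's induction step uses uninformative signals there, which are co-monotone but not strictly so, whereas your default $d_t=+1$ keeps all beliefs moving). For (3), the paper isolates the experiment construction into a separate lemma (\cref{app:lemma_expe}) whose idea is different from yours: it gives every agent the \emph{same} signal alphabet $\{s^0,s^1,\dots,s^I\}$ and supports $\pi_t$ essentially on the diagonal profiles $(s^\ell,\dots,s^\ell)$, so that all agents share the same marginal at every realized signal; heterogeneity in beliefs then comes entirely from the heterogeneous \emph{priors} $q_{i,t}$, not from heterogeneous marginals. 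Your nested-profile construction instead engineers genuinely heterogeneous marginals at the realized profile, which is what forces the ordering $\ell_{1,t}\le\dots\le\ell_{I,t}$ and the ``enough mass on $P^0$'' bookkeeping. Both work; the paper's route is shorter because it offloads the subtlety to a single reusable lemma, while yours is more self-contained and makes explicit why two-point signal sets fail.
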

\begin{proof}
	Since we have already argued the necessity, we only need to argue the sufficiency.  Fix any arbitrary profile of utility functions  $(u_i)_{i \in \mI}$. The proof is by induction on $T$.  
	
	If $T=1$,  there are no consecutive cycles. We therefore need to argue that we can rationalize all choice data. Given the thresholds $(\overline{u}_i)_{i \in \mI}$, choose $p_0$ and $p_{i,1}$ as follows. For all $i$ such that $x_{i,1}=x$, choose $p_{i,1} > \overline{u}_i$. Similarly, for all $i$ such that $x_{i,t}=y$, choose  $\overline{u}_i >p_{i,1}$. Choose $p_0 < \min_{i \in \mI} p_{i,1}$. For all $i$, choose positive numbers $(\pi^x_{i,1},\pi_{i,1}^y )$ such that $\sum_{i} \pi_{i,1}^x <1$, $\sum_{i} \pi_{i,1}^y<1$, and
	\[\frac{\pi_{i,1}^x}{\pi_{i,1}^y}= \frac{p_{i,1}}{1-p_{i,1}}\frac{1-p_0}{p_0} >1.\]
	Such numbers exist. Moreover, since $\pi_{i,1}^x/\pi_{i,1}^y > 1$ for all $i$, we have that  
	\[[\pi_{i,1}^x-\pi_{i,1}^y][\pi_{j,1}^x-\pi_{j,1}^y]> 0,\]
	for all $(i,j)$. From \cref{app:lemma_expe} in  \cref{app:private_signals}, there exist a strictly co-monotone experiment and a profile of signals $(s_{i,1},\dots,s_{I,1})$ such that $(\pi_{i,1}^x, \pi_{i,1}^y)= (\pi_{i,1}(s_{i,1}|\omega^x), \pi_{i,1}(s_{i,1}|\omega^y))$.  Hence, upon receiving the signal $s_{i,1}$, agent $i$'s belief is $p_{i,1}$. \medskip 
	
	As the induction hypothesis, suppose we can rationalize any $T$ period choice data that has no consecutive cycles. For the induction step, we will argue that we can also do this for any $T+1$ period choice data that has no consecutive cycles.\medskip 
	
	So consider choice data $\mX:=(x_{i,t})_{(i,t) \in \mI \times \mT}$ with no consecutive cycles and $|\mT|=T+1$. By the induction hypothesis, we can rationalize the first $T$ observations $(x_{i,t})_{(i,t) \in \mI \times \{1,\dots,T\}}$. Let the period $T$ beliefs be $p_{i,T}$.\medskip
	
	If the choices of all agents are the same in periods $T$ and $T+1$, that is, $x_{i,T}=x_{i,T+1}$ for all $i\in\mI$, then the choice data is rationalized by all agents receiving uninformative signals.\medskip
		
	It remains to consider the case where at least one agent changes their choice. So, assume that there is an agent $\tilde{i}$ who switches from $x_{\tilde{i},T}=x$ to $x_{\tilde{i},T+1}=y$. (The case where an agent $\tilde{i}$ switches from $a_{\tilde{i},T}=y$ to $a_{\tilde{i},T+1}=x$ is treated analogously and, therefore, omitted.) Since there are no consecutive cycles,  there is no agent $j$ switching from $x_{j,T}=y$ to $x_{j,T+1}=x$. So for all $i\in \mI$, choose any $p_{i,T+1}$ that satisfy
	$$\begin{array}{ll}
		p_{i,T}>p_{i,T+1}>\overline{u}_i \;\; & \text{ if } x_{i,T}=x_{i,T+1}=x, \\
		\overline{u}_i > p_{i,T}>p_{i,T+1} \;\; & \text{ if } x_{i,T}=x_{i,T+1}=y, \\
		p_{i,T}>\overline{u}_i > p_{i,T+1} \;\; & \text{ if } x_{i,T}=x,\;\;x_{i,T+1}=y.
	\end{array}$$
	Such period-$(T+1)$ beliefs can always be chosen because, by the induction hypothesis the period-$T$ beliefs satisfy $p_{i,T}> \overline{u}_i$ when $x_{i,T}=x$ (respectively, $p_{i,T}< \overline{u}_i$ when $x_{i,T}=y$).\medskip

	Observe that the constructed period $T+1$ beliefs are strictly lower than the period $T$ beliefs and hence, we can construct co-monotone experiments that generate these beliefs. For all $i$, choose positive numbers $(\pi_{i,T+1}^x,\pi_{i,T+1}^y)$ such that $\sum_i \pi_{i,T+1}^x<1$, 
	$\sum_i \pi_{i,T+1}^y<1 $, and
	$$\frac{\pi_{i,T+1}^x}{\pi_{i,T+1}^y}=\frac{p_{i,T+1}}{1-p_{i,T+1}}\frac{1-p_{i,T}}{p_{i,T}}<1.$$
	To complete the proof, we invoke \cref{app:lemma_expe} in  \cref{app:private_signals}.
	\end{proof}

\medskip

Unlike \cref{thm:baseline}, \cref{thm:comon_inv} requires the state to be time-invariant. Why should the state evolution matter in this case? In the case of common beliefs, by definition, beliefs must move in the same direction for all agents across observations. This is also true for co-monotone experiments as long as the state is time-invariant. When the state is time-varying however, this need not be true. To see this, suppose that the evolution of the state causes the belief $q_{i,T+1}$ to be lower than $p_{i,T}$ for all agents. Now, consider a co-monotone experiment where the belief of each agent increases, that is, $p_{i,T+1} > q_{i,T+1}$ for all $i$. The different strengths of the signals may, however, cause the belief $p_{i,T+1}$ of some agents to increase above their belief $p_{i,T}$, while for others it remains below. \cref{thm:comon_var} states that this is in fact enough to rationalizable all choice data. 

\begin{theorem}\label{thm:comon_var}
	Every choice data $\mathcal{X}$ is rationalizable with strictly co-monotone experiments (and a time-varying state).
\end{theorem}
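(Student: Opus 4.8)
The plan is to give a \emph{direct} construction, with no induction, exploiting the one extra degree of freedom that a time-varying state provides: the transition kernel can be chosen to be ``memoryless,'' i.e.\ $\gamma_t(\omega^x\,|\,\omega^x)=\gamma_t(\omega^x\,|\,\omega^y)=c_t$ for some $c_t\in(0,1)$, in which case the interim belief $q_{i,t}$ equals $c_t$ for \emph{every} agent $i$, regardless of the period-$(t-1)$ belief $p_{i,t-1}$. Such a transition resets all agents to a common belief at the start of each period and thereby severs the link between consecutive periods that was responsible for the consecutive-cycle obstruction in \cref{thm:comon_inv}.

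Concretely, I would first fix any utility functions, say with thresholds $\overline u_i=1/2$ for all $i$ (the specific values are irrelevant). For each $(i,t)$ pick a target posterior $p_{i,t}\in(0,1)$ with $p_{i,t}>\overline u_i$ whenever $x_{i,t}=x$ and $p_{i,t}<\overline u_i$ whenever $x_{i,t}=y$; such numbers exist because $\overline u_i\in(0,1)$. Set the prior $p_0=1/2$ and, for each period $t\in\mT$, choose $c_t\in(0,\min_{i\in\mI}p_{i,t})$ and let $\gamma_t(\omega^x\,|\,\omega^x)=\gamma_t(\omega^x\,|\,\omega^y)=c_t$. Then $q_{i,t}=c_t$ for all $i$ and all $t$ (for $t=1$ this is $q_{i,1}=p_0 c_1+(1-p_0)c_1=c_1$; for $t\ge 2$, $q_{i,t}=p_{i,t-1}c_t+(1-p_{i,t-1})c_t=c_t$). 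Since $c_t\ne 1$, the state is genuinely time-varying.

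Next, for each period I would realize the move from the common interim belief $c_t$ to the target profile $(p_{i,t})_{i\in\mI}$ with a single strictly co-monotone experiment. The required marginal likelihood ratios are $\ell_{i,t}:=\dfrac{p_{i,t}}{1-p_{i,t}}\cdot\dfrac{1-c_t}{c_t}$, and because $c_t<p_{i,t}$ for every $i$ we have $\ell_{i,t}>1$ for every $i$. Hence I can pick positive numbers $(\pi^x_{i,t},\pi^y_{i,t})$ with $\pi^x_{i,t}/\pi^y_{i,t}=\ell_{i,t}$, all scaled small enough (finitely many agents) that $\sum_i\pi^x_{i,t}<1$ and $\sum_i\pi^y_{i,t}<1$; since $\pi^x_{i,t}-\pi^y_{i,t}>0$ for all $i$, the products $[\pi^x_{i,t}-\pi^y_{i,t}][\pi^x_{j,t}-\pi^y_{j,t}]$ are strictly positive, so \cref{app:lemma_expe} delivers a strictly co-monotone experiment $\pi_t$ and a signal profile $s_t$ with $(\pi_{i,t}(s_{i,t}|\omega^x),\pi_{i,t}(s_{i,t}|\omega^y))=(\pi^x_{i,t},\pi^y_{i,t})$. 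Bayes' rule from $q_{i,t}=c_t$ then returns exactly $p_{i,t}$, which lies on the correct side of $\overline u_i$, so the constructed parameters rationalize $\mX$.

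The construction is deliberately transparent, so there is essentially no obstacle; the one point requiring care is that the experiment chosen in each period be compatible with strict co-monotonicity, and this becomes immediate once the memoryless transition has placed all agents at the same interim belief $c_t$, because then every agent's signal likelihood ratio can be taken larger than one. It is worth flagging in the writeup where this argument fails for a time-invariant state: there $q_{i,t}=p_{i,t-1}$, the reset is impossible, and co-monotonicity forces all agents' beliefs to move in the same direction from one period to the next --- precisely the content of the necessity direction of \cref{thm:comon_inv}.
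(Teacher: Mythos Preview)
Your proof is correct and rests on exactly the same key idea as the paper's: choosing a memoryless transition $\gamma_t(\omega^x\,|\,\omega^x)=\gamma_t(\omega^x\,|\,\omega^y)$ so that all agents' interim beliefs collapse to a common value $c_t$ below every target posterior, after which a strictly co-monotone experiment can push each agent's belief upward to the desired $p_{i,t}$ via \cref{app:lemma_expe}.

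The only difference is presentational. The paper frames the argument as an induction on $T$, invoking the memoryless transition at the inductive step to pass from period $T$ to $T+1$ (and handling period $1$ via the base case from \cref{thm:comon_inv}). You instead apply the memoryless transition at \emph{every} period $t\in\mT$ simultaneously, which eliminates the need for induction altogether. Your direct construction is arguably cleaner, since the induction in the paper is doing no real work: the memoryless transition already severs the dependence on $p_{i,t-1}$, so nothing is carried from one period to the next. A small cosmetic point: the paper phrases the result for an arbitrary profile $(\overline u_i)_{i\in\mI}$, whereas you fix $\overline u_i=1/2$; your argument goes through unchanged for arbitrary thresholds, so you may want to state it that way to match the paper's strength.
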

\begin{proof}
The proof is by induction on $T$. Throughout, we fix arbitrary thresholds $(\overline{u}_i)_{i \in \mI}$. If $T=1$, all choice data can clearly be rationalized (as shown in the proof of \cref{thm:comon_inv}). \medskip 

As the induction hypothesis, suppose that we can rationalize all choice data with $T$ periods and, for the induction step, consider any choice data $(x_{i,t})_{(i,t) \in \mI \times \mT}$ with $|\mT|=T+1$ periods.  Since the choice data $(x_{i,t})_{(i,t) \in \mI \times \{1,\dots, T\}}$ is rationalizable, there are period-$T$ beliefs $p_{i,T}$ such that 
$p_{i,T} > \overline{u}_i$ when $x_{i,T}=x$, and $p_{i,T} < \overline{u}_i$ when $x_{i,T}=y$.\medskip

Choose any $0 < \tilde{q} <\min_{i\in\mI} \overline{u}_i$ and consider the following state transitions
$$\gamma_{T+1}(\omega^x|\omega^x)=1-\gamma_{T+1}(\omega^y|\omega^x)=\gamma_{T+1}(\omega^x|\omega^y)=1-\gamma_{T+1}(\omega^y|\omega^y)=\tilde{q}.$$
By construction of the transition probabilities, $q_{i,T+1}= \tilde{q}$ for all $i$, regardless of the beliefs $p_{i,T}$. The evolution of the state at period $T+1$ is independent of its realization period $T$. \medskip 

For all $i\in \mI$, choose any $p_{i,T+1}>\tilde{q}$ that satisfies
	$$\begin{array}{ll}
		p_{i,T+1}>\overline{u}_i \;\; & \text{ if } x_{i,T+1}=x, \\
		p_{i,T+1}<\overline{u}_i \;\; & \text{ if } x_{i,T+1}=y.
	\end{array}$$
These numbers exist because $\tilde{q}<\min_{i\in\mI} \overline{u}_i$.\medskip
	
Finally, as in the proof of \cref{thm:comon_inv}, we can construct a strictly co-monotone experiment $\pi_{T+1}:\{\omega^x,\omega^y\}\ \rightarrow \Delta(\times_i S_{i,t})$, which generates these beliefs.  This completes the proof. \end{proof}

The observant reader may have wondered why we assume that agents have the same prior belief and agree about state transitions. It should be clear from the proof of \cref{thm:comon_inv} that nothing changes if we allow the prior belief to vary by agent (since the induction step does not involve the prior belief). \cref{thm:comon_var} shows that, if we allow for time varying states, anything goes. Thus, any further relaxation that allows for heterogeneous prior beliefs or private state transitions (even those that are systematically related) would also result in there being no testable implications.

\medskip

Finally, we can generalize \cref{thm:comon_inv} to accommodate multiple alternatives and general preferences in exactly the same way we did for the baseline model. With multiple alternatives, a consecutive cycle would then just be a cycle in terms of \cref{def:cycles_mult_actions} over consecutive time periods. General preferences would once again be the equivalent of permuting the choice data. Obviously \cref{thm:comon_var} also generalizes and all choice data is rationalizable with strictly co-monotone experiments, a time-varying state and, either or both of, multiple alternatives and general preferences.

\medskip

We conclude this section with two final remarks. First, an alternative to assuming that agents receive private signals is to assume that all agents observe the same signal, but have different (co-monotone) beliefs about its likelihood. That is, agents observe the signal $s_t$ at period $t$, but disagree about the likelihood of $s_t$. Agent $i$ believes that the likelihood is $\pi_{i,t}(s_t|\omega)$ when the state is $\omega$. In turn, this translates into agent $i$ having the belief $p_{i,t}$. Differences in beliefs are thus coming from differences in personal views about the data generating process. This alternative interpretation does not alter our results. 
\medskip 

Second, we assume that the agents are Bayesian. Again, this is not needed. All we need is that the agents revise their beliefs in a way that satisfies \eqref{eq:comon}. There are several models in the behavioral economics literature that have this feature. We now describe a couple of them. \medskip

To start with, consider the following updating behavior first motivated by \cite{grether1980}. Posteriors are derived by the formula
\begin{align*}
	p_{i,t}&=\frac{	q_{i,t}\left(\pi_{i,t}(s_{i,t}|\omega^x)\right)^{\beta_{i,t}}}{q_{i,t}\left(\pi_{i,t}(s_{i,t}|\omega^x)\right)^{\beta_{i,t}}+\left(1-q_{i,t}\right)\left(\pi_{i,t}(s_{i,t}|\omega^y)\right)^{\beta_{i,t}}}
\end{align*}
where $\beta_{i,t}\geq 0$. Here, the interpretation is that agents distort (that is, either over or under react to) the signal via their individual, time specific parameter $\beta_{i,t}$. Observe that when $\beta_{i,t}=1$, this corresponds to standard Bayesian updating but the flexibility accommodates two common biases in belief updating: ``under-inference'' when $\beta_{i,t}<1$ and ``over-inference'' when $\beta_{i,t}>1$. 

Observe that $p_{i,t} > q_{i,t}$ implies that $\pi_{i,t}(s_{i,t}|\omega^x)> \pi_{i,t}(s_{i,t}|\omega^y)$ (resp., $p_{i,t} < q_{i,t}$ implies that $\pi_{i,t}(s_{i,t}|\omega^x)< \pi_{i,t}(s_{i,t}|\omega^y))$. Therefore, if agent $j$ follows the same updating rule (but with parameter $\beta_{j,t}$), his belief $p_{j,t}$ moves co-monotonically with agent $i$'s belief if the experiment is co-monotone.  Our characterization thus continues to hold. The significant theoretical and experimental literature that builds on this framework is described in several surveys; two recent examples are \citet{benjamin2019} and \citet{gabaix2019}.

\medskip

Another framework is that of ``coarse'' updating. Here, in every period $t$, each agent $i$ has belief cutoffs $0=:\beta^0_{i,t}<\beta^1_{i,t}<\cdots< \beta^{N_{i,t}}_{i,t}<\beta^{N_{i,t}+1}_{i,t}:=1$ and $N_{i,t}$ ordered beliefs $0\leq \hat{q}^1_{i,t} < \cdots<\hat{q}^{N_{i,t}}_{i,t}\leq 1$. Roughly speaking, belief updating works as follows. The updated belief is $p_{i,t}=\hat{q}^{n_i}_{i,t}$ if the belief obtained by Bayesian updating (given by the right side of equation \ref{eq:Bayes}) lies between the cutoffs $\beta^{n_i-1}_{i,t}$ and $\beta^{n_i}_{i,t}$. In words, agents update coarsely by assigning intervals of beliefs obtained by Bayesian updating to single beliefs. There are several models that have this feature. Two recent examples are \citet{wilson2014} and \citet{jakobsen2021}; the latter paper contains a detailed description of the literature. Note that a co-monotone experiment combined with such updating generates posteriors that are ordered in the sense of \eqref{eq:comon} and so this behavior will also be captured under \cref{def:comonotone}. Additionally, observe that we can allow all agents to use different updating rules with distinct behavioral biases as long as their beliefs satisfy the property \eqref{eq:comon}.

\section{Applications}\label{sec:applications}

\subsection{Testing for discrimination}\label{sec:app_discrimination}

Disentangling channels of discrimination is known to be empirically difficult (for instance, see the discussion in the recent survey of \citealp{guryan2013}). Our results can provide distinct suggestive evidence for different types of discrimination. To see this, consider richer choice data $(x_{i,t},\theta_{t})_{(i,t)\in\mI\times\mT}$, where $x_{i,t}$ is agent $i$'s decision to either approve or reject application $t$, and $\theta_t$ is a vector of observable characteristics (which are observable to both the agents and the analyst). As an example, prospective Ph.D. students routinely apply to several graduate schools, who either accept or reject their applications. In this example, the dataset consists of the approval decisions along with some observable characteristics of the applicants; for example, sex, age, or ethnicity. Other examples include applications for jobs, mortgages, and credit cards.\medskip 

For simplicity, suppose that there is a single characteristic $\boldsymbol{\theta}$, which only takes two values $\theta$ and $\theta'$ that, for example, correspond to being white and black or male and female. Here, the interpretation is that the applications are identical along all observable characteristic but $\boldsymbol{\theta}$.   Let $\theta$ be the characteristic corresponding to the ``favored'' group. \medskip

There are two main forms of discrimination: statistical and taste-based. We consider the former first. Consider the sub-sample where we condition on $[\theta_t=\theta]$, that is, we restrict attention to the observations having the favored characteristics. Assume that it can be rationalized. This means that we cannot reject the hypothesis that the agents share the same belief in this sub-sample. Now, consider the sub-sample where we condition on $[\theta_t=\theta']$. Suppose, we cannot rationalize this sub-sample. Then, this implies that, in contrast to the favored group, there are at least two agents $(i,j)$ who have different beliefs in the sub-sample of the ``disadvantaged'' group. This is suggestive evidence of \textit{statistical} discrimination.\medskip 

The evidence is, however, not fully conclusive. The two agents $(i,j)$ may have different beliefs because their information differs. We cannot reject that hypothesis with our simplest test. However, we can if we assume that the agents receive private but co-monotone signals. (We also need the state to be time-invariant, but this is a natural assumption when testing for discrimination where each observation corresponds to a different individual with the identical characteristic $\boldsymbol{\theta}$.)  Two additional remarks are worth making. First, we cannot definitely state whether it is agent $i$, agent $j$ or both, who are discriminating. We can only state that at least one is discriminating. However, an advantage of our tests is that it does not require the information the agents have to be identical across observations and, therefore, across sub-samples. Second, even if the complete sample $(x_{i,t})_{(i,t)\in\mI\times\mT}$ is rationalizable, we cannot reject the null hypothesis of statistical discrimination. This is because it is possible that agents have common beliefs because they are \textit{all} statistically discriminating against the disadvantaged group.\medskip

We now consider the case of \textit{taste-based} discrimination. Once again, consider the two sub-samples, where we condition on $[\theta_t=\theta]$ and $[\theta_t=\theta']$. If the two sub-samples are rationalizable, we can infer the ordering over thresholds in the two sub-samples and compare them. If the two orderings differ, there exists (at least) a pair of agents $(i,j)$ such that 
$\overline{u}_i^{\theta} \geq \overline{u}_j^{\theta}$ and $\overline{u}_j^{\theta'} > \overline{u}_i^{\theta'}$. This implies that if we assume that the two agents have common beliefs in each sub-sample---an assumption we cannot reject since the two sub-samples are rationalizable---then agent $j$ more often approves applications with characteristic $\theta$ than agent $i$, while the converse is true when the characteristic is $\theta'$. In other words, this is evidence of \textit{taste-based} discrimination. Finally, note that, if the two orderings coincide, we cannot reject the null hypothesis of no taste-based discrimination because this discrimination, while present, may not be reflected in the choice data.

\subsection{Committees} \label{sec:app_committees}

While there is a large theoretical literature on committee voting, there is substantially less structural empirical work. \citet{iaryczower2012} estimate a model of committee voting by using the judicial decisions in the US Supreme Court. Here, both states (whether the meaning of the law favors the plaintiff or defendant) and alternatives (the ruling) are binary. \citet{hansen2016} study individual voting behavior in Bank of England's Monetary Policy Committee. They model the decision making environment using a binary state (more/less inflationary state), binary alternative (vote for either high or low interest rates) and assume that the state is drawn independently from an identical distribution in each period.\medskip

Our model is quite similar to these papers albeit with two key distinctions. They allow for private information but do not allow for learning (\citealp{hansen2016} instead focus on the role of signaling) and they effectively assume that (conditioning on observable covariates) each decision problem that a committee faces is ex ante identical (prior to the observation of signals). Moreover, the aim in these papers is to estimate the underlying model parameters and for this they require additional functional form assumptions. By contrast, although our model allows us to determine some ordinal information (such as the ranking of voters' cutoffs), we cannot estimate any underlying parameters because they are not identified.\footnote{In this sense, our paper is closer to \citet{iaryczower2018} who develop a model of communication within the committee and estimate the effects of deliberation on the decisions of US appellate courts. Their model has multiple equilibria and is partially identified.}\medskip

One feature that is common to all these papers is that they assume committee members have the same prior beliefs in each decision problem before receiving their private signals. In other words, the heterogeneity across agents is in terms of their preferences or ``ideology'' instead of their prior beliefs or ``bias.''\footnote{\citet{iaryczower2012} discuss in footnotes 11 and 18 why allowing for heterogeneity in the prior beliefs lead to poorly behaved estimates.} A natural question is whether these two different sources of (single parameter) heterogeneity lead to distinct testable implications? This would be informative because we might think the sources of bias and ideology differ and so different interventions might be required should we aim to correct them.\medskip

\cref{thm:comon_inv} showed that, if we allow for co-monotone signals, then there is no additional generality in assuming different preferences for each agent. So instead, suppose we assume that agents have the same preferences, information in each period arrives via common signals (as in \cref{def:rat}) but agents have different prior beliefs. This effectively amounts to assuming that the committee is deliberative but that members differ in terms of bias as opposed to ideology. We can show that heterogeneity in prior beliefs or preferences have the same testable implications when the state is time-invariant but the former has strictly weaker (but nontrivial) restrictions when the state is time varying.\footnote{Formal statements and proofs available upon request.} In other words, differences between these sources of heterogeneity are testable.

\subsection{Counterfactual analysis}

A natural question is whether we can use the observed choice data $\mX$ to make counterfactual predictions about actions in a hypothetical period $T+1$. To fix ideas, consider the baseline model. \cref{thm:baseline} implies that we can rule out the subset of period $T+1$ actions that would generate a cycle. However, a consequence is that we can never rule out the outcome where all agents  pick the same alternative, that is, $x_{i,T+1}=x_{j,T+1}$ for all $(i,j)$. In the context of committee voting, this implies that a unanimous vote for either alternative is always possible in the counterfactual scenario.\medskip

That said, our model allows for considerably sharper \textit{conditional} predictions. To be more precise, if we knew the alternative that one agent plans to choose in period $T+1$, this can impose a lot more structure on the actions of the other agents. For instance, if a member $i$ of a committee announces that she plans to vote $x$ in period $T+1$, we might be able to rule out the event where the alternative $y$ gets a majority of votes and therefore is enacted. We end the paper by demonstrating how these observations can be employed in a simple example.

\begin{example}\label{eg:prediction}
	Consider the choice data $\mX$ of three agents $i$, $j$, $k$ and two time periods represented in the following table:
	\begin{center}
		\begin{tabular}{|c|c|c|c|}\hline
			& $i$ & $j$  & $k$   \\ \hline
			$t_1$ & $x$ & $y$ & $y$   \\ \hline
			$t_2$ & $x$ & $x$ & $y$ \\ \hline
		\end{tabular}
	\end{center}		
	
	\medskip
	
	Since there are no cycles, this data is rationalizable in the sense of \cref{def:rat}. Now suppose we want to predict the possible action profiles in a third period $t_3$. The following table describe all possible counterfactual action profiles (that do not generate cycles):\\
	
	\begin{center}		
		\begin{tabular}{|c|c|c|c|}\hline
			& $i$ & $j$  & $k$   \\ \hline
			$t_1$ & $x$ & $y$ & $y$   \\ \hline
			$t_2$ & $x$ & $x$ & $y$ \\ \hline
			$t_3$ & $x$ & $x$ & $x$ \\ \hline
		\end{tabular}
		\qquad\qquad
		\begin{tabular}{|c|c|c|c|}\hline
			& $i$ & $j$  & $k$   \\ \hline
			$t_1$ & $x$ & $y$ & $y$   \\ \hline
			$t_2$ & $x$ & $x$ & $y$ \\ \hline
			$t_3$ & $x$ & $x$ & $y$ \\ \hline
		\end{tabular}
		\qquad\qquad
		\begin{tabular}{|c|c|c|c|}\hline
			& $i$ & $j$  & $k$   \\ \hline
			$t_1$ & $x$ & $y$ & $y$   \\ \hline
			$t_2$ & $x$ & $x$ & $y$ \\ \hline
			$t_3$ & $x$ & $y$ & $y$ \\ \hline
		\end{tabular}
		\qquad\qquad
		\begin{tabular}{|c|c|c|c|}\hline
			& $i$ & $j$  & $k$   \\ \hline
			$t_1$ & $x$ & $y$ & $y$   \\ \hline
			$t_2$ & $x$ & $x$ & $y$ \\ \hline
			$t_3$ & $y$ & $y$ & $y$ \\ \hline
		\end{tabular}
	\end{center}

\bigskip

There are eight possible action profiles that are possible in period $t_3$ and the observed choice data eliminates half of those. Note that the example is such that the choices in the first two time periods completely order the cutoffs of all three agents ($\overline{u}_i<\overline{u}_j<\overline{u}_k$). Therefore, observing the choices from any additional time periods would not change the counterfactual predictions and we would still only be able to eliminate half the possible action profiles. 

\medskip

Now suppose that we knew that either agent $i$ or $j$ will choose $y$ in period $t_3$. We can immediately conclude that only the right two action profiles are possible and so $x$ cannot be chosen by a majority. In other words, knowing the choices of a single agent in a counterfactual time period can allow us to considerably sharpen our predictions. Finally, the insights from this example can easily be generalized to demonstrate how counterfactual analysis can be conducted when there are general preferences and/or multiple actions and/or co-monotone experiments.
\end{example}

\section{Concluding remarks}\label{sec:discussion}

We develop a simple yet rich framework via which we can test different assumptions on preferences and the underlying learning environment with very parsimonious data. The theoretical results can be used to uncover evidence for different types of discrimination and to detect differences in bias versus ideology in committee decision making. The model allows for counterfactual analysis that can be considerably sharpened if the actions of a subset of agents is known. Finally, we can test the assumption of common beliefs against other models of private learning (because they have distinct testable implications). This shows that it is possible to distinguish between different learning behaviors. This is a potentially useful result for the large behavioral economics literature that studies non-Bayesian learning.

\newpage

\appendix

\section{Proof of \cref{thm:mult_actions}}\label{app:mult_actions}

\begin{proof} ($\Rightarrow.$) In the text. \medskip 

($\Leftarrow.$) We first define an incomplete binary relation $\succ$ on pairs $(i,n)$ of agents $i\in\mI$ and indices of actions $n\in\{1,\dots,N\}$ as follows. For all $i\in \mI$, for all $n\in\{2,\dots,N\}$, \[(i,n)\succ (i,n-1).\] 
For all $i\neq j$, for all $n\in\{2,\dots,N\}$, 
\[(i,n)\succ (j,n'), \]
if there exists  $t\in\mT$ such that  $x_{i,t}=y_{n-1}$ and $x_{j,t}=y_{n'}$. All other pairs are not comparable. \medskip

This  relation reflects the order of the preference cutoffs. The first line captures the order $\overline{u}_{i,n}>\overline{u}_{i,n-1}$ of actions. As for  the second line, observe that, when $x_{i,t}=y_{n-1}$,  $x_{j,t}=y_{n'}$, if the data is rationalizable, then $\overline{u}_{i,n}>p_t>\overline{u}_{i,n-1}$ and $\overline{u}_{j,n'+1}>p_t>\overline{u}_{j,n'}$ which imply $\overline{u}_{i,n}>\overline{u}_{j,n'}$.\medskip
	
	We now argue that the relation $\succ$ does not have a cycle of any length, that is, there do not exist sequences  $((i_1,n_1),\dots,(i_k,n_k))$ for some $k\geq 2$ such that $$(i_1,n_1)\succ (i_2,n_2)\succ\cdots \succ (i_k,n_k)\succ (i_1,n_1).$$ 
	
	By contradiction, suppose that such a sequence exists. By definition of $\succ$, $(i_1,n_1)\succ (i_2,n_2)$ implies that there is a time period $t_2$ at which $x_{i_1,t_2}=y_{m_2}$ where $m_2\leq n_1-1$ and $x_{i_2,t_2}=y_{n_2}$ (if $i_2=i_1$ then $x_{i_1,t_2}=x_{i_2,t_2}=y_{m_2}=y_{n_2}$ with $m_2=n_2=n_1-1$).\medskip
	
	Similarly, $(i_2,n_2)\succ (i_3,n_3)$ implies that there is a time period $t_3$ at which $x_{i_2,t_3}=y_{\ell}$ where $\ell\leq n_2-1$ and $x_{i_3,t_3}=y_{n_3}$. Now let us compare periods $t_2$ and $t_3$. Observe that agent $i_2$ chooses a strictly lower action in $t_3$ ($y_{\ell}$ as opposed to $y_{n_2}$) and therefore the no cycles condition implies that agent $i$ must also choose a weakly lower action $x_{i_1,t_3}=y_{m_3}$ where $m_3\leq m_2 < n_1$. \medskip 
	
	Iterating the argument, there exists a period $t_k$ where agent $i_k$ chooses $x_{i_k,t_k}=y_{n_k}$ and agent $i$ chooses action $x_{i_1,t_k}=y_{m_k}$ where $m_k < n_1$.\medskip
	
	Now observe that if $i_k=i_1$, then $x_{i_k,t_k}=y_{n_k}=x_{i_1,t_k}=y_{m_k}$ which is a contradiction because $(i_k,n_k)\succ (i_1,n_1)$ means that $n_k=m_k=n_1+1$. \medskip
	
	Therefore, it must be the case that $i_k\neq i_1$. Consequently, there exists period $t_1$ in which $x_{i_k,t_1}=y_{n_k-1}$ and $x_{i_1,t_1}=y_{n_1}$. Compare periods $t_k$ and $t_1$. Agent $i_k$ chooses a strictly lower action ($y_{n_k-1}$ versus $y_{n_k}$) in $t_1$ whereas agent $i_1$ chooses a strictly higher action ($y_{n_1}$ versus $y_{m_k}$). These choices constitute a cycle, a contradiction.
	
	\medskip
	
	In addition, we define the symmetric part $\sim$ as $(i,n) \sim (i,n)$ for all $(i,n)$. Thus, the relation $\succ \cup \sim$ is reflexive and consistent. By Theorem 3 (that generalizes Szpilrajn's Theorem) in \citet{suzumura1976}, there exists a complete and transitive binary relation $\succcurlyeq^*$ on the set of pairs $(i,n)$  that extends $\succ \cup \sim$. In turn, this implies that we can find numbers $\overline{u}_{i,n}$ for all $(i,n)$ that satisfy $\overline{u}_{i,n}>\overline{u}_{j,n'}$ when $(i,n)\succcurlyeq^* (j,n')$. In particular, note that
	\begin{enumerate}[label=\roman*.]
		\item $\overline{u}_{i,n}>\overline{u}_{i,n'}$ for $n>n'$ and
		\item $\overline{u}_{i,n}>\overline{u}_{j,n'}$ if there is a time period $t\in\mT$ such that $x_{i,t}=y_{n-1}$ and $x_{j,t}=y_{n'}$.
	\end{enumerate}

	\medskip
	
	Item  i gives us the correctly ordered cutoffs for each agent, so these cutoffs generate valid preferences. Item ii implies that, for every $t\in \mT$, we can find a $p_t$ such that $\overline{u}_{i,n+1}>p_t>\overline{u}_{i,n}$ where $x_{i,t}=y_n$ for all $i\in \mI$. To see this, note that, by point ii above, $x_{i,t}=y_n$ and $x_{j,t}=y_{n'}$ implies $\overline{u}_{i,n+1}>\overline{u}_{j,n'}$. Therefore,
	\[\min_{i,n}\{\overline{u}_{i,n+1}\; | x_{i,t}=y_n \}>\max_{j,n'}\{\overline{u}_{j,n'}\; |  x_{j,t}=y_{n'} \},\]
	and any $p_t$ between the left and right sides will suffice.
	
	\medskip
	
To complete the proof, we can construct the signals and experiments by following the same steps as in the proof of \cref{thm:baseline} from the main text. 
\end{proof}

\section{ From marginals to joint distributions }\label{app:private_signals}
\begin{lemma}\label{app:lemma_expe} Suppose that there exist positive numbers $(\pi_{i}^x,\pi_{i}^y)_{i \in \mathcal{I}}$ such that $\sum_i \pi_i^x < 1$, $\sum_{i} \pi_i^y < 1$, and
\[[\pi_i^x - \pi_i^y][\pi_j^x-\pi_j^y] >0, \]
for all $(i,j)$. Then, there exists a strictly co-monotone experiment $\pi: \{\omega^x,\omega^y\} \rightarrow \Delta(\times_{i \in \mI} S_i)$ and a profile of signals $s=(s_1,\dots,s_I)$ such that 
\[(\pi_{i}(s_i|\omega^x), \pi_{i}(s_i|\omega^y)) = 
(\pi_{i}^x,\pi_{i}^y),\]
for all $i$. Moreover, the profile of signal $s$ is in the support of the experiment. 
\end{lemma}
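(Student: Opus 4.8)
The plan is to construct the joint experiment explicitly by placing most of its mass on two ``grand'' signal profiles and sprinkling the remaining mass so that each agent's marginal hits the prescribed values. Concretely, let each signal set be $S_i = \{s_i, s_i'\}$ and let $s = (s_1,\dots,s_I)$ be the target profile. I would define $\pi(\cdot\,|\,\omega^x)$ to put probability $\pi_i^x$ on the profile that agrees with $s$ in coordinate $i$ and equals $s'_j$ in every coordinate $j \ne i$ — call this profile $\sigma^i$ — and put the leftover mass $1 - \sum_i \pi_i^x$ on the all-primed profile $\sigma^0 := (s_1',\dots,s_I')$. Do the same for $\pi(\cdot\,|\,\omega^y)$ using the numbers $\pi_i^y$. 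Since the $\pi_i^x$ are positive and sum to less than $1$ (and likewise for $y$), this is a well-defined pair of distributions, and $\sigma^0$ is in the support under both states, so $s$ being reached requires a small extra argument — see the next paragraph.

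The first thing to check is that the marginals are correct. Under state $\omega^x$, agent $i$ sees signal $s_i$ exactly on the profile $\sigma^i$, which carries mass $\pi_i^x$; on $\sigma^0$ and on every $\sigma^j$ with $j\ne i$, coordinate $i$ is $s_i'$. Hence $\pi_i(s_i\,|\,\omega^x) = \pi_i^x$ and $\pi_i(s_i'\,|\,\omega^x) = 1 - \pi_i^x$, as desired, and symmetrically under $\omega^y$. Next, co-monotonicity: the support of $\pi$ is $\{\sigma^0,\sigma^1,\dots,\sigma^I\}$ (union of the supports under the two states). On the profile $\sigma^i$, agent $i$'s coordinate is $s_i$ with $\pi_i(s_i|\omega^x) - \pi_i(s_i|\omega^y) = \pi_i^x - \pi_i^y$, while any other agent $j$'s coordinate is $s_j'$ with $\pi_j(s_j'|\omega^x) - \pi_j(s_j'|\omega^y) = -(\pi_j^x - \pi_j^y)$; on $\sigma^0$ every agent's coordinate is primed, giving difference $-(\pi_k^x-\pi_k^y)$ for agent $k$. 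In every case the relevant product of differences is, up to sign bookkeeping, of the form $\pm(\pi_i^x-\pi_i^y)(\pi_j^x-\pi_j^y)$, and I would verify that the signs always work out to be strictly positive because the hypothesis gives $(\pi_i^x-\pi_i^y)(\pi_j^x-\pi_j^y) > 0$ for all $i,j$ — in particular all the $\pi_i^x - \pi_i^y$ share a common sign, so $-(\pi_i^x-\pi_i^y)(\pi_j^x-\pi_j^y)$ would be negative, which is a problem. This sign mismatch is the main obstacle, and it forces a small redesign: rather than the all-primed ``leftover'' profile, I would use a construction in which the off-diagonal coordinates are chosen so that the pairwise products never pick up a spurious minus sign — e.g. split each $S_i$ finer, or choose the leftover profile and the $\sigma^j$'s so that on each support point at most one agent is ``at $s_i$'' while the others are at a signal whose likelihood-ratio difference has the \emph{same} sign as $\pi_i^x - \pi_i^y$ rather than the opposite. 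An alternative that sidesteps the issue entirely: make $\pi$ a product measure, $\pi(\cdot|\omega) = \bigotimes_i \pi_i(\cdot|\omega)$ with each $\pi_i$ the obvious two-point distribution; then the full support is a product set, $s$ is manifestly in it, the marginals are correct by construction, and the co-monotonicity inequality at every support point is literally $(\pi_i^x-\pi_i^y)(\pi_j^x-\pi_j^y)$ (for $s_i,s_j$), $(\pi_i^x-\pi_i^y)(-(\pi_j^x-\pi_j^y)) \cdot(-1)$ handled by noting the primed-coordinate difference is the negative, so one must instead observe that the \emph{marginal} difference for whichever signal agent $j$ actually received is $\pm(\pi_j^x-\pi_j^y)$ and the sign is absorbed — this still needs the same care.

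So the cleanest route is the product-measure construction together with the observation that the condition ``$(\pi_i^x-\pi_i^y)(\pi_j^x-\pi_j^y)>0$ for all $i,j$'' says precisely that all the numbers $\pi_i^x - \pi_i^y$ are nonzero and of one common sign; I would then note that strict co-monotonicity of the product experiment at the support point $(t_1,\dots,t_I)$ asks for $[\pi_i(t_i|\omega^x)-\pi_i(t_i|\omega^y)][\pi_j(t_j|\omega^x)-\pi_j(t_j|\omega^y)]>0$, and since $\pi_i(t_i|\omega^x)-\pi_i(t_i|\omega^y)$ equals $+(\pi_i^x-\pi_i^y)$ if $t_i=s_i$ and $-(\pi_i^x-\pi_i^y)$ if $t_i=s_i'$, the product is $(\pm)(\pm)(\pi_i^x-\pi_i^y)(\pi_j^x-\pi_j^y)$ — strictly positive iff the two $\pm$ signs agree, which they need not. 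Hence the product measure is \emph{not} co-monotone in general, and the real content of the lemma is that one must correlate the signals. The correct construction: let $S_i=\{s_i,s_i'\}$, put under $\omega^x$ mass on the two profiles $(s_1,\dots,s_I)$ and $(s_1',\dots,s_I')$ only, in proportions to be solved for, and similarly under $\omega^y$; with only two support points the co-monotonicity product at $(s_1,\dots,s_I)$ is $(\pi_i^x-\pi_i^y)(\pi_j^x-\pi_j^y)>0$ and at $(s_1',\dots,s_I')$ is $(-(\pi_i^x-\pi_i^y))(-(\pi_j^x-\pi_j^y))=(\pi_i^x-\pi_i^y)(\pi_j^x-\pi_j^y)>0$, so both hold. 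The only remaining task is to check that two-point distributions under each state can be chosen with the prescribed marginals $\pi_i(s_i|\omega)$ — but a two-point joint on $\{(s_1,\dots,s_I),(s_1',\dots,s_I')\}$ forces all marginals to coincide, so this works only if all $\pi_i^x$ are equal and all $\pi_i^y$ are equal, which is false in general. Therefore the genuine construction needs $I+1$ (or more) support points arranged so that on support point $k$ agents in some set $A_k$ are ``high'' and the rest ``low'' with the sets $A_k$ nested (a chain), which makes every pairwise product $(\pm)(\pm)(\pi_i^x-\pi_i^y)(\pi_j^x-\pi_j^y)$ have matching signs because on a nested family any two coordinates are ordered the same way at every support point; I would order the agents so that $\pi_i^x - \pi_i^y$ (all of one sign, say positive) are nondecreasing, use support points $\sigma^0 \subset \sigma^1 \subset \cdots$ with $\sigma^k$ having its first $k$ coordinates ``high,'' solve the resulting triangular linear system for the mass on each $\sigma^k$ under each state (its solvability and positivity following from $\sum\pi_i^x<1$, $\sum\pi_i^y<1$ and monotonicity of the $\pi_i$), and finally relabel so that the target $s$ is the all-high profile. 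The main obstacle, as the above makes clear, is exactly getting the signs in the co-monotonicity inequality to line up simultaneously at every support point, and the nested/chain support is what resolves it.
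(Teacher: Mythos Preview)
You correctly identify the heart of the matter: the obstacle is getting the signs in the co-monotonicity product to agree at \emph{every} support point, and your first three attempts are each diagnosed accurately. The gap is in your final, nested-chain construction.

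Concretely, take binary $S_i=\{s_i,s_i'\}$ and support points $\sigma^0,\dots,\sigma^I$ with $\sigma^k$ having coordinates $1,\dots,k$ equal to $s_i$ and the rest equal to $s_i'$. At $\sigma^k$ with $0<k<I$, pick $i\leq k$ and $j>k$. Agent $i$'s marginal difference is $\pi_i^x-\pi_i^y$, agent $j$'s is $-(\pi_j^x-\pi_j^y)$, and their product is $-(\pi_i^x-\pi_i^y)(\pi_j^x-\pi_j^y)<0$. Your claim that ``on a nested family any two coordinates are ordered the same way at every support point'' conflates two things: nestedness guarantees that the \emph{relative} ordering of $i$ and $j$ is stable across support points, not that they are simultaneously high or simultaneously low at each one. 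In fact, with binary signals strict co-monotonicity forces the support to be contained in $\{(s_1,\dots,s_I),(s_1',\dots,s_I')\}$ (since at any other profile some pair has opposite signs), and you already observed that this two-point support collapses all the marginals. So no binary-signal construction can work in general. A secondary issue: your triangular system for the masses gives $m_k^x=\pi_k^x-\pi_{k+1}^x$, which is positive only if the $\pi_i^x$ are strictly decreasing in $i$; the hypothesis gives no such monotonicity, and ordering by $\pi_i^x-\pi_i^y$ does not help.

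The paper's construction sidesteps the sign problem by enlarging each $S_i$ to $\{s^0,s^1,\dots,s^I\}$ and making the experiment \emph{symmetric}: every agent's marginal on signal $s^\ell$ (for $\ell\geq 1$) equals $(\pi_\ell^x,\pi_\ell^y)$, regardless of which agent receives it. Mass is concentrated (up to a small $\varepsilon$-perturbation) on the diagonals $(s^\ell,\dots,s^\ell)$ and on $(s^0,\dots,s^0)$, with the $\varepsilon$ spread uniformly over $\times_i\{s^1,\dots,s^I\}$ to put the target profile $s=(s^1,\dots,s^I)$ in the support. The point is that on any support profile other than the all-$s^0$ one, agent $i$'s marginal difference is $\pi_{\ell_i}^x-\pi_{\ell_i}^y$ for some $\ell_i\in\mathcal I$, so the pairwise product is $(\pi_{\ell_i}^x-\pi_{\ell_i}^y)(\pi_{\ell_j}^x-\pi_{\ell_j}^y)>0$ directly from the hypothesis; no minus signs ever appear. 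The idea you were missing is that one should manufacture additional signals whose marginal differences are themselves among the given $\pi_\ell^x-\pi_\ell^y$, rather than their negatives.
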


\begin{proof}
We construct $\pi(\cdot|\omega^x)$. 
For all $\varepsilon\geq 0$,  let $\pi_i^{x}(\varepsilon)$ be a solution to
\[(1-\varepsilon)\pi^x_{i}(\varepsilon)+ \varepsilon \frac{1}{I} = \pi^x_i.\]
Choose $\varepsilon$ small enough such that $\pi^x_i(\varepsilon) \in (0,1)$ and $\sum_i \pi^x_i(\varepsilon) <1$. Since $\pi^x_i(0)= \pi^x_i$, this is possible by continuity. \medskip 

Let $S_i=\{s^0,s^1,\dots,s^{I}\}$ for all $i$ and $S^*= \times_{i \in \mathcal{I}} (S_i \setminus \{s^0\})$. We define $\pi(\cdot|\omega^x)$ as follows: 
\begin{equation*}
\pi(s_1,\dots,s_n|\omega^x)=
\begin{cases}
(1-\varepsilon) \pi_i^x(\varepsilon) + \varepsilon \frac{1}{I^I} & \text{if\;} s_1=\dots =s_n = s^i \\
\varepsilon \frac{1}{I^I} & \text{if\;} s  \in S^* \setminus \bigcup_{i}\{(s^i,\dots,s^i)\}\\
(1-\varepsilon) (1-\sum_i \pi_i^x(\varepsilon) )& \text{if\;} s_1=\dots=s_n=s^0 \\
0 & \text{otherwise}. 
\end{cases}
\end{equation*}
 It is routine to check that $\pi(\cdot|\omega^x)$ is indeed a probability over $\times_{i \in \mathcal{I}}S_i$. Note that the support of $\pi(\cdot|\omega^x)$ is $S^* \cup \{(s^0,\dots,s^0)\}$. For all $s \in S^*$,  agent $i$'s marginal  $\pi_i(s_i|\omega^x)$ is 
\[(1-\varepsilon)\pi^x_{\ell}(\varepsilon)+ \varepsilon \frac{1}{I} = \pi^x_{\ell} \]
if $s_i=s^{\ell}$ with $\ell \in \{1,\dots,I\}$.\medskip

The probability  $\pi(\cdot|\omega^y)$ is constructed analogously. Note that the epsilons we use to construct $\pi(\cdot|\omega^x)$ and $\pi(\cdot|\omega^y)$ may differ. However, for any epsilon smaller than the minimum of the two of them, we continue to have two well-defined probabilities  with the same epsilon.\medskip

It remains to check that the experiment is co-monotone. Consider any profile of signals $s \in S^*$. We have that  
\begin{equation*}
[\pi_i(s_i|\omega^x) - \pi_i(s_i|\omega^y)] [\pi_j(s_j|\omega^x) - \pi_j(s_j|\omega^y) ] = 
[\pi_{\ell}^x - \pi_{\ell}^y ] [[\pi_{\ell'}^x - \pi_{\ell'}^y ] >0,
\end{equation*}
with $(s_i,s_j)= (s^{\ell},s^{\ell'})$, $(\ell,\ell') \in \mathcal{I} \times \mathcal{I}$.

Finally, if the signal profile is $(s^0,\dots,s^0)$, we have that 
\begin{equation*}
[\pi_i(s^0|\omega^x) - \pi_i(s^0|\omega^y)] [\pi_j(s^0|\omega^x) - \pi_j(s^0|\omega^y) ] = [\pi_i(s^0|\omega^x) - \pi_i(s^0|\omega^y)]^2 > 0,
\end{equation*}
for all pairs $(i,j)$. This follows immediately from the fact that all agents have the same marginal over $s^0$, regardless of the state.\medskip

The experiment is therefore co-monotone. This completes the proof. 
\end{proof}

\medskip
We can illustrate the construction when there are two agents. When the state is $\omega^x$, the joint distribution over signals is given by

\begin{table}[h]
\begin{tabular}{c|ccc}
 & $s^1$ & $s^2$ & $s^0$ \\ \hline
$s^1$  & $(1-\varepsilon)\pi_1^x(\varepsilon) + \varepsilon \frac{1}{4}$ &  $\varepsilon \frac{1}{4}$ & $0$ \\
$s^2$  & $\varepsilon \frac{1}{4}$ & $(1-\varepsilon)\pi_2^x(\varepsilon) + \varepsilon \frac{1}{4}$ & $0$ \\
$s^0$ & $0$ & $0$ & $(1-\varepsilon)(1-\pi_1^x(\varepsilon)-\pi_2^x(\varepsilon))$\\
\end{tabular}
\end{table}
Thus, conditional on any signal, the two agents have the same marginals. Moreover, we construct the numbers $\epsilon$, $\pi_1^x(\varepsilon)$ and $\pi_2^x(\varepsilon)$ such that 
\[(1-\varepsilon)\pi_i^x(\varepsilon) + \varepsilon \frac{1}{4}=\pi_i^x\]
for all $i$, and the probability is well-defined.\medskip

We do a similar construction when the state is $\omega^y$.  It is then immediate to verify that the experiment is co-monotone. Finally, if agent $i$ receives the signal $s^i$, the ratio of marginals is 
\[\frac{\pi_i^x}{\pi_i^y}, \]
as required in the main text. 
\newpage 
\bibliographystyle{ecta}
\bibliography{DynamicChoices_Refs}

\end{document}